\newtheorem{theorem}{Theorem}
\newtheorem{lemma}[theorem]{Lemma}
\newtheorem{example}{Example}
\newcommand\myatop[2]{\genfrac{}{}{0pt}{}{#1}{#2}}
\newcommand{\tr}{{\mathrm{Tr}}}
\newcommand{\gf}{{\mathrm{GF}}}
\newcommand{\PG}{{\mathrm{PG}}}
\newcommand{\wt}{{\mathtt{wt}}}
\newcommand{\bC}{{\mathbb{C}}}
\newcommand{\C}{{\mathcal{C}}}
\newcommand{\cA}{{\mathcal{A}}}
\newcommand{\cH}{{\mathcal{H}}}
\newcommand{\cO}{{\mathcal{O}}}
\newcommand{\bc}{{\mathbf{c}}}
\newcommand{\bh}{{\mathbf{h}}}
\begin{document}

\begin{frontmatter}



\title{The Subfield Codes of  Hyperoval and Conic codes\tnotetext[fn1]{C. Ding's research was supported by
The Hong Kong Research Grants Council, Proj. No. 16300415.}}


\author[zl]{Ziling Heng}
\address[zl]{Department of Computer Science and Engineering, The Hong Kong University of Science and Technology, Clear Water Bay, Kowloon, Hong Kong, China}
\ead{zilingheng@163.com}
\author[cd]{Cunsheng Ding}
\address[cd]{Department of Computer Science and Engineering, The Hong Kong University of Science and Technology, Clear Water Bay, Kowloon, Hong Kong, China}
\ead{cding@ust.hk}



\begin{abstract}
Hyperovals in $\PG(2,\gf(q))$ with even $q$ are maximal arcs and an interesting research topic in finite geometries and combinatorics. Hyperovals in $\PG(2,\gf(q))$ are equivalent to $[q+2,3,q]$ 
MDS codes over $\gf(q)$, called hyperoval codes, in the sense that one can be constructed from the other. Ovals in $\PG(2,\gf(q))$ for odd $q$ are equivalent to $[q+1,3,q-1]$ MDS codes over 
$\gf(q)$, which are called oval codes. In this paper, we investigate the binary subfield codes of two families of hyperoval codes and the $p$-ary subfield codes of the conic codes. The weight distributions of these subfield codes and the parameters of their duals are determined. As a byproduct, we generalize one family of the binary subfield codes to the $p$-ary case and obtain its weight distribution. The codes presented in this paper are optimal or almost optimal in many cases. In addition, the parameters of these binary codes and $p$-ary codes seem new.  
\end{abstract}

\begin{keyword}
Oval, hyperoval, conic, linear code, subfield code 

\MSC 51E21 \sep 94B05 \sep 51E22

\end{keyword}

\end{frontmatter}

\section{Introduction}
For a prime power $q$, let $\gf(q)$ denote the finite field with $q$ elements. Let $\wt(\bc)$ denote the Hamming weight of a vector $\bc\in \gf(q)^n$. An $[n,k,d]$ \emph{code} $\C$ over $\gf(q)$ is a $k$-dimensional subspace
of $\gf(q)^n$ with minimum Hamming distance $d$. Denote by $\C^\bot$ the dual code of a linear code $\C$. We call an $[n,k,d]$ code \emph{distance-optimal} if no $[n,k,d+1]$ code exists and \emph{dimension-optimal} if no $[n,k+1,d]$ code exists. Let $A_i$ denote the number of codewords with Hamming weight $i$ in a code $\C$ of length $n$.
The \emph{weight enumerator} of $\C$ is defined by $1+A_1z+A_2z^2+ \cdots + A_nz^n.$ The sequence $(1, A_1, A_2, \cdots, A_n)$ is referred to as the \emph{weight distribution} of the code $\C$. A code $\C$ is said to be a $t$-weight code  if the number of nonzero $A_i$ in the sequence $(A_1, A_2, \cdots, A_n)$ is equal to $t$. The weight distribution of a code is used to estimate the error correcting capability and compute the error probability of error detection and correction of the code \cite{KL}. The weight distributions of linear codes have also applications in cryptography and combinatorics.

Let $\PG(2,\gf(q))$ denote the projective plane over $\gf(q)$. An $r$-\emph{arc} $\cA$ of 
$\PG(2,\gf(q))$ is a set of $r$ points in $\PG(2,\gf(q))$ such that no three of them are 
collinear, where $r \geq 3$. It is known that $|\cA| \leq q+2$ for even $q$. It is conjectured 
that $|\cA| \leq q+1$ for odd $q$ \cite{B}.  

When $q$ is odd, $(q+1)$-arcs are called \emph{ovals}. When $q$ is even, $(q+2)$-arcs are referred to as \emph{hyperovals}. Hyperovals are maximal arcs as they have the maximal number of points as arcs. For even $q$, all hyperovals in $\PG(2,\gf(q))$ can be constructed with a special type of permutation polynomials on $\gf(q)$, which is described in the following theorem.

\begin{theorem}\label{th-hyperovals}\cite[Th. 9.67]{LN}
Let $q>2$ be a power of 2. Any hyperoval $\cH$ in $\PG(2,\gf(q))$ can be written in the form
$$\cH(f)=\{(f(c),c,1):c\in \gf(q)\}\cup \{(1,0,0)\}\cup\{(0,1,0)\},$$ where $f\in \gf(q)[x]$ is such that
\begin{enumerate}
\item[(i)] $f$ is a permutation polynomial of $\gf(q)$ with $\deg(f)<q$ and $f(0)=0$, $f(1)=1$;
\item[(ii)] for each $a\in \gf(q)$, $g_a(x)=(f(x+a)+f(a))x^{q-2}$ is also a permutation polynomial of $\gf(q)$.
\end{enumerate}
Conversely, every such set $\cH(f)$ is a hyperoval.
\end{theorem}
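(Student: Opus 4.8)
\medskip
\noindent\emph{Proof proposal.} The plan is to prove the two directions of the equivalence separately, the forward one by first normalising an arbitrary hyperoval with a projective transformation and then reading off $f$, and in both directions by translating the arc axiom (no three collinear points) into the polynomial conditions (i) and (ii). The key tool throughout is the elementary fact that a line of $\PG(2,\gf(q))$ meets an arc in at most two points, applied to the pencils of lines through the three ``coordinate'' points.

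First I would fix the frame. Since $\mathrm{PGL}(3,\gf(q))$ is transitive on ordered pairs of distinct points, I may assume $(1,0,0),(0,1,0)\in\cH$; then the line $z=0$ already meets $\cH$ in these two points, so the remaining $q$ points of $\cH$ are affine and can be scaled to the form $(\ast,\ast,1)$. Running through the $q$ lines other than $z=0$ in the pencil through $(1,0,0)$, namely $\{(t,c,1):t\in\gf(q)\}\cup\{(1,0,0)\}$ for $c\in\gf(q)$, the arc axiom allows at most one affine point of $\cH$ on each, and by counting (the $q$ affine points distributed among these $q$ lines) there is exactly one; this defines $f$ by $(f(c),c,1)\in\cH$, and reducing modulo $x^q-x$ makes $\deg f<q$. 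Doing the same for the pencil through $(0,1,0)$ shows that each ``horizontal'' line carries exactly one affine point of $\cH$, which says precisely that $f$ is a bijection. Finally, to arrange $f(0)=0$ and $f(1)=1$ I would use that the affine translations and the diagonal maps of $\PG(2,\gf(q))$ fix $(1,0,0)$ and $(0,1,0)$ and preserve this normal form: a translation moves the point of $\cH$ lying over $c=0$ to the origin, and then a diagonal map sends a further affine point of $\cH$ to $(1,1,1)$; bijectivity of $f$ survives both steps. This establishes condition (i).

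Next I would obtain (ii) by applying the arc axiom at an affine point $P=(f(a),a,1)$ of $\cH$: the other $q+1$ points of $\cH$ must lie on the $q+1$ distinct lines of the pencil through $P$. The points $(1,0,0)$ and $(0,1,0)$ account for two of these lines, and for $b\in\gf(q)^*$ the line joining $P$ to $(f(a+b),a+b,1)$ has point at infinity whose first coordinate is $(f(a+b)+f(a))/b=g_a(b)$, using $b^{q-2}=b^{-1}$. Hence the arc axiom at $P$ amounts to: $g_a$ is injective on $\gf(q)^*$ and never $0$ there. But $g_a(b)=0$ would force $f(a+b)=f(a)$, impossible since $f$ is a bijection, and $g_a(0)=0$; so $g_a$ permutes $\gf(q)$, i.e. $g_a$ is a permutation polynomial. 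For the converse I would reverse every implication: given $f$ with (i) and (ii), the set $\cH(f)$ has $q+2$ distinct points, and non-collinearity of a triple is checked by cases --- a triple containing both $(1,0,0)$ and $(0,1,0)$ is fine because $z=0$ misses the affine part; a triple with exactly one point at infinity is handled by injectivity of $f$; and a triple of three affine points, over $a$, $a+b_1$, $a+b_2$ say, is collinear iff $g_a(b_1)=g_a(b_2)$, which (ii) excludes.

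The routine parts are the frame normalisation and the enumeration of the three triple types. The step needing the most care is the clean matching between the geometric statement ``the lines from $P$ to the other affine points of $\cH$ are pairwise distinct and distinct from the two distinguished lines through $P$'' and the algebraic statement ``$g_a$ is a permutation polynomial of $\gf(q)$''; this relies on the convention $0^{q-2}=0$ (so that $g_a(0)=0$ and $g_a(b)=b^{-1}(f(a+b)+f(a))$ for $b\neq 0$) together with the observation that condition (i) is exactly what upgrades ``injective on $\gf(q)^*$'' to ``permutation of $\gf(q)$''. Carrying this bookkeeping in both directions simultaneously, and attributing each triple type to the correct one of (i), (ii), is the only genuine obstacle.
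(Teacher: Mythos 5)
The paper does not prove this statement at all: it is quoted verbatim from Lidl--Niederreiter \cite[Th.~9.67]{LN} and used as a black box, so there is no internal proof to compare against. Your argument is the standard proof of this classical fact and, as a plan, it is correct and complete in its essentials: the normalisation of two points of $\cH$ to $(1,0,0)$ and $(0,1,0)$ via $\mathrm{PGL}(3,\gf(q))$ (the statement ``can be written in the form $\cH(f)$'' is indeed only true up to a collineation, and your reading is the intended one), the pigeonhole argument on the two pencils through the distinguished points yielding a well-defined bijection $f$, the further normalisation $f(0)=0$, $f(1)=1$ by a translation and a diagonal map, and the identification of the slope of the secant from $(f(a),a,1)$ to $(f(a+b),a+b,1)$ with $g_a(b)$ via $b^{q-2}=b^{-1}$. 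Two small points of bookkeeping are worth tightening when you write it out: first, in your case analysis of triples with exactly one infinite point, the triples containing $(1,0,0)$ and two affine points are \emph{automatically} non-collinear (distinct affine points of a graph have distinct $y$-coordinates), so only the $(0,1,0)$ triples actually consume the injectivity of $f$; second, in the three-affine-point case the collinearity determinant reduces to $b_1b_2\bigl(g_a(b_1)+g_a(b_2)\bigr)$, which confirms your claim that collinearity over base point $a$ is exactly $g_a(b_1)=g_a(b_2)$ and shows the criterion is independent of which of the three points is taken as base. With those details filled in, the equivalence of the arc axiom with (i) plus (ii) is exactly as you describe.
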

Polynomials satisfying Conditions (i) and (ii) in Theorem \ref{th-hyperovals} are called \emph{o-polynomials}. Let $q=2^m$. The following two o-polynomals over $\gf(q)$ are well known:
\begin{enumerate}
\item[(1)] (\cite{S57}) $f_1(x)=x^2$ (the corresponding oval is called the translation oval);
\item[(2)] (\cite{S62, S71},\ Segre o-polynomial) $f_2(x)=x^6$ with $m$  (the Segre oval).
\end{enumerate}
More constructions of o-polynomials could be found in \cite{DY}.

Given a hyperoval $\cH=\{\bh_1,\bh_2,\cdots,\bh_{q+2}\}$ in $\PG(2,\gf(q))$ with $q$ even, we construct a linear code $\C_{\cH}$ of length $q+2$ over $\gf(q)$ with generator matrix
$$G_{\cH}=[\bh_1\bh_2\cdots\bh_{q+2}]$$
where each $\bh_i$ is a column vector in $\gf(q)^3$.  
Since a hyperoval in $\PG(2,\gf(q))$ meets each line either in 0 or 2 points, the code $\C_{\cH}$ has only the nonzero weights $q$ and $q+2$ and $\C_{\cH}$ is projective.  Then the parameters and the weight enumerator of $\C_{\cH}$ are given in the following theorem.
\begin{theorem}\label{th-hyperovalcode}
The hyperoval code $\C_{\cH}$ is a two-weight MDS $[q+2,3,q]$ code over $\gf(q)$ and has weight enumerator
$$1+\frac{(q+2)(q^2-1)}{2}z^q+\frac{q(q-1)^2}{2}z^{q+2}.$$
The dual $\C_{\cH}^{\perp}$ of $\C_{\cH}$ is an MDS $[q+2,q-1,4]$ code over $\gf(q)$.
\end{theorem}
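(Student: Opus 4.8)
The plan is to establish the three-part claim—the parameters, the weight enumerator, and the parameters of the dual—in that order, using the geometric properties of a hyperoval as the main tool.

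\medskip

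\noindent\textbf{Step 1: Parameters of $\C_{\cH}$.} First I would observe that the $q+2$ points of $\cH$ span $\PG(2,\gf(q))$ (indeed any three non-collinear points already do), so the matrix $G_{\cH}$ has rank $3$ and $\C_{\cH}$ has dimension $3$. Every nonzero codeword of $\C_{\cH}$ has the form $\bbv G_{\cH}$ for a nonzero $\bbv\in\gf(q)^3$, and the zero coordinates of this codeword correspond exactly to those columns $\bh_i$ lying on the projective line with equation $\bbv\cdot\bx=0$. Since $\cH$ is a hyperoval, every line meets $\cH$ in either $0$ or $2$ points; hence a nonzero codeword has either $2$ or $0$ zero coordinates, i.e.\ Hamming weight $q$ or $q+2$. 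In particular the minimum distance is $q$, so $\C_{\cH}$ is a $[q+2,3,q]$ code, and the Singleton bound $d\le n-k+1=q$ is met with equality, making it MDS.

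\medskip

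\noindent\textbf{Step 2: The weight enumerator.} Next I would count codewords of each weight. The number of weight-$q$ codewords equals the number of nonzero $\bbv$ (up to nothing—we count all vectors, not projective points) whose associated line is a secant of $\cH$. A secant line is determined by a pair of points of $\cH$, so there are $\binom{q+2}{2}=\frac{(q+2)(q+1)}{2}$ secant lines; each line corresponds to exactly $q-1$ nonzero scalar multiples $\bbv$. Hence $A_q = (q-1)\cdot\frac{(q+2)(q+1)}{2}=\frac{(q+2)(q^2-1)}{2}$. The remaining nonzero codewords have weight $q+2$: there are $q^3-1$ nonzero vectors in all, so $A_{q+2}=q^3-1-\frac{(q+2)(q^2-1)}{2}$, which simplifies to $\frac{q(q-1)^2}{2}$ (equivalently, external lines number $\frac{q(q-1)}{2}$, each giving $q-1$ scalars). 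This yields the stated weight enumerator $1+\frac{(q+2)(q^2-1)}{2}z^q+\frac{q(q-1)^2}{2}z^{q+2}$; I would include a one-line consistency check that $A_0+A_q+A_{q+2}=q^3$.

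\medskip

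\noindent\textbf{Step 3: The dual code.} Finally, since $\C_{\cH}$ is an MDS $[q+2,3,q]$ code, it is a standard fact that its dual is also MDS, with parameters $[n,n-k,k+1]=[q+2,q-1,4]$. I would cite this (e.g.\ the MDS duality theorem from any coding-theory reference, or derive it directly: $\C_{\cH}^\perp$ has length $q+2$ and dimension $q-1$, and its minimum distance is $4$ because no $3$ columns of $G_{\cH}$ are linearly dependent—this is exactly the no-three-points-collinear condition for an arc—while some $4$ columns are dependent since $q+2\ge 4$ columns in a $3$-dimensional space cannot all be in general position). I do not anticipate a serious obstacle here; the only point requiring care is the bookkeeping in Step 2, making sure the secant/external line counts and the scalar multiples are combined correctly so that the two weight frequencies sum to $q^3-1$.
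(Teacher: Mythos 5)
Your proposal is correct and follows essentially the same route the paper takes: the paper justifies this theorem by the single observation that every line meets a hyperoval in $0$ or $2$ points (so the nonzero weights are $q$ and $q+2$), and your secant/external-line count with $q-1$ scalar multiples per line, together with the standard MDS duality for the $[q+2,q-1,4]$ dual, fills in exactly the bookkeeping the paper leaves implicit. All the arithmetic checks out, including $A_q+A_{q+2}=q^3-1$.
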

Conversely, the column vectors of a generator matrix of any MDS $[q+2,3,q]$ code over $\gf(q)$ form a hyperoval in $\PG(2,\gf(q))$. Thus constructing hyperovals in $\PG(2,\gf(q))$ is equivalent to constructing $[q+2,3,q]$ codes over $\gf(q)$. 
Therefore, every $[q+2,3,q]$ code over $\gf(q)$ is called a \emph{hyperoval code.}

A \emph{conic} in $\PG(2,\gf(q))$ is a set of $q+1$ points of $\PG(2,\gf(q))$ that are zeros of a nondegenerate homogeneous quadratic form in three variables. It is known that a  conic is an oval in $\PG(2,\gf(q))$ and an oval in $\PG(2,\gf(q))$ is a conic if $q$ is odd \cite{AK}.  
Hence, conics and ovals in  $\PG(2,\gf(q))$ are the same when $q$ is odd. Let $q$ be odd. Define 
$$\cO=\{(x^2,x,1):x\in \gf(q)\}\cup \{(1,0,0)\}.$$ 
It is well known that $\cO$ is a conic. 

We construct a linear code $\C_{\cO}$ of length $q+1$ over $\gf(q)$ with generator matrix
$$G_{\cO}=\begin{bmatrix} x_1^2 & x_2^2 & \cdots & x_q^2 & 1\\
 x_1 & x_2 & \cdots & x_q & 0\\
 1 & 1 & \cdots & 1 & 0\\ \end{bmatrix},$$ where $\gf(q)=\{x_1,x_2,\cdots,x_{q}\}$. 
Note that any line in $\PG(2, \gf(q))$ meets an oval in at most two points. In addition, 
it is known that $\cO$ has external lines, tangent lines and secants. The following 
theorem then follows, and is known in the literature. 

\begin{theorem}
The conic code 
$\C_{\cO}$ has parameters $[q+1,3,q-1]$ and weight enumerator 
$$1+\frac{q(q^2-1)}{2}z^{q-1}+(q^2-1)z^q+\frac{q(q-1)^2}{2}z^{q+1}.$$ 
Its dual is also an MDS code with parameters $[q+1,q-2,4]$. 
\end{theorem}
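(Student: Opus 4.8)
The plan is to read off the weight of an arbitrary codeword of $\C_{\cO}$ from its description as a polynomial evaluation, count how many codewords have each weight, and deduce the dual's parameters from the MDS property. A generic codeword of $\C_{\cO}$ is $(f(x_1),\dots,f(x_q),a)$ with $f(X)=aX^{2}+bX+c$ and $(a,b,c)\in\gf(q)^{3}$, so its Hamming weight is $q-N(f)+\varepsilon$, where $N(f)=|\{x\in\gf(q):f(x)=0\}|$ and $\varepsilon=1$ if $a\ne 0$ while $\varepsilon=0$ if $a=0$. Any three columns of $G_{\cO}$ attached to distinct $x_i$ form a Vandermonde matrix, hence $G_{\cO}$ has rank $3$, so $\dim\C_{\cO}=3$ and $|\C_{\cO}|=q^{3}$.

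Next I would split into cases. If $a=0$, the codeword comes from the last two rows of $G_{\cO}$: it is zero when $b=c=0$, has weight $q$ when $b=0\ne c$, and weight $q-1$ when $b\ne 0$. If $a\ne 0$, then (using that $q$ is odd) $N(f)$ equals $0$, $1$, or $2$ according as the discriminant $\Delta=b^{2}-4ac$ is a nonsquare, zero, or a nonzero square in $\gf(q)$, giving weights $q+1$, $q$, $q-1$ respectively. For each fixed $a\ne 0$, as $(b,c)$ runs over $\gf(q)^{2}$ each value of $\Delta$ is attained exactly $q$ times, while $\gf(q)^{*}$ has $(q-1)/2$ squares and $(q-1)/2$ nonsquares; tallying these counts over all $a\ne 0$ and adding the $a=0$ contributions yields
$$A_{q-1}=\frac{q(q^{2}-1)}{2},\qquad A_{q}=q^{2}-1,\qquad A_{q+1}=\frac{q(q-1)^{2}}{2},$$
which together with the zero codeword sum to $q^{3}$ as a consistency check. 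This proves the parameters $[q+1,3,q-1]$ and the stated weight enumerator.

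Finally, since $d=q-1=(q+1)-3+1$, the code attains the Singleton bound and is MDS; as the dual of an MDS code is again MDS, $\C_{\cO}^{\perp}$ has parameters $[q+1,\,q-2,\,4]$ (indeed $4=(q+1)-(q-2)+1$). I do not anticipate a genuine obstacle: the only delicate point is the bookkeeping of the $(b,c)$-pairs within each discriminant class and the appeal to the quadratic formula, which is exactly where oddness of $q$ enters. An equivalent, slicker route is geometric: identify each nonzero codeword up to scalar with a line of $\PG(2,\gf(q))$, whose weight is $q+1$ minus the number of points of $\cO$ it contains; counting secants ($\binom{q+1}{2}$), tangents ($q+1$), and external lines ($q(q-1)/2$), each with scalar multiplicity $q-1$, reproduces the same weight distribution.
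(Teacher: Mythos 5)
Your proposal is correct, and it is more detailed than what the paper provides: the paper does not actually write out a proof of this theorem, but simply observes that every line of $\PG(2,\gf(q))$ meets the oval $\cO$ in $0$, $1$, or $2$ points and that all three line types (external, tangent, secant) occur, then declares the result known. That implicit argument is exactly your ``slicker route'' at the end: nonzero codewords correspond, up to the $q-1$ scalars, to lines of $\PG(2,\gf(q))$, and the weight is $q+1$ minus the number of points of $\cO$ on the line, so counting the $\binom{q+1}{2}$ secants, $q+1$ tangents, and $q(q-1)/2$ external lines gives the weight enumerator. Your primary argument is instead algebraic: you read each codeword as the evaluation vector of $f(X)=aX^2+bX+c$ together with the extra coordinate $a$, split on $a=0$ versus $a\neq 0$, and use the discriminant $\Delta=b^2-4ac$ (here oddness of $q$ is essential) plus the observation that for fixed $a\neq 0$ each value of $\Delta$ arises from exactly $q$ pairs $(b,c)$. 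The tallies check out: $A_{q-1}=q(q-1)^2/2+q(q-1)=q(q^2-1)/2$, $A_q=(q-1)q+(q-1)=q^2-1$, $A_{q+1}=q(q-1)^2/2$, summing with the zero codeword to $q^3$. The MDS/dual step is also right. The algebraic route buys self-containedness (no appeal to the classification of lines relative to a conic), at the cost of some bookkeeping; the geometric route is shorter but leans on facts about ovals that the paper only cites. Either is acceptable.
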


Let $\gf(q^m)$ be a finite field with $q^m$ elements, where $q$ is a power of a prime and $m$ is a positive integer. Given an $[n,k]$ code $\C$ over $\gf(q^m)$, we construct a new $[n, k']$ code $\C^{(q)}$ over $\gf(q)$ as follows. Let $G$ be a generator matrix of $\C$. Take a basis of $\gf(q^m)$ over $\gf(q)$. Represent each entry of $G$ as an $m \times 1$ column vector of $\gf(q)^m$ with respect to this basis, and replace each entry of $G$ with the corresponding $m \times 1$ column vector of $\gf(q)^m$. In this way, $G$ is modified into a $km \times n$ matrix over $\gf(q)$, which generates the new \emph{subfield code} $\C^{(q)}$ over $\gf(q)$ with length $n$.
It is known that the subfield code $\C^{(q)}$ is independent of both the choice of the basis of $\gf(q^m)$ over $\gf(q)$ and the choice of the generator matrix $G$ of $\C$ (see Theorems 2.1 and 2.6 in \cite{DH}). Note that subfield codes are different from the subfield subcodes well studied in the literature.

For a linear code $\C$ over $\gf(q^m)$, a relationship between the minimal distance of $\C^{\bot}$  and that of $\C^{(q)\bot}$ is given as follows.
\begin{lemma}\label{th-dualdistance}\cite[Th 2.7]{DH}
The minimal distance $d^\perp$ of $\C^{\bot}$ and the minimal distance $d^{(q)\perp}$ of $\C^{(q)\bot}$ satisfy
$$d^{(q)\perp}\geq d^\perp.$$
\end{lemma}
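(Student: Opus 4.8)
The plan is to prove the inequality $d^{(q)\perp}\ge d^\perp$ by exhibiting, from any low-weight codeword of $\C^{(q)\perp}$, a codeword of $\C^\perp$ of weight at most the same. First I would set up the standard description of the subfield code via the trace. Fix a basis $\{\alpha_1,\dots,\alpha_m\}$ of $\gf(q^m)$ over $\gf(q)$; by the general theory (\cite[Thm.~2.1]{DH}) one may describe $\C^{(q)}$ as the code obtained by taking, for each codeword $\bc=(c_1,\dots,c_n)\in\C$ and each $\gf(q)$-linear functional on $\gf(q^m)$, the $\gf(q)$-vector of images of the coordinates. Concretely, writing the trace map $\Tr\colon\gf(q^m)\to\gf(q)$, one has
$$\C^{(q)}=\left\{\left(\Tr(\beta c_1),\dots,\Tr(\beta c_n)\right) : \bc\in\C,\ \beta\in\gf(q^m)\right\}.$$
This realization is what makes the duality argument transparent.

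Next I would analyze the dual. A vector $\bx=(x_1,\dots,x_n)\in\gf(q)^n$ lies in $\C^{(q)\perp}$ precisely when $\sum_i x_i\,\Tr(\beta c_i)=0$ for every $\bc\in\C$ and every $\beta\in\gf(q^m)$. Pulling the $\gf(q)$-scalars $x_i$ inside the trace and using $\gf(q)$-linearity, this reads $\Tr\!\bigl(\beta\sum_i x_i c_i\bigr)=0$ for all $\beta\in\gf(q^m)$ and all $\bc\in\C$. Since the trace form $(\beta,\gamma)\mapsto\Tr(\beta\gamma)$ is nondegenerate on $\gf(q^m)$, the inner condition forces $\sum_i x_i c_i=0$ for every $\bc\in\C$. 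In other words, $\bx\in\C^{(q)\perp}$ if and only if $\bx$, viewed as an element of $\gf(q^m)^n$, lies in $\C^\perp$. Thus $\C^{(q)\perp}=\C^\perp\cap\gf(q)^n$ as subsets of $\gf(q^m)^n$.

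From the set-theoretic identity the weight comparison is immediate: every nonzero $\bx\in\C^{(q)\perp}$ is a nonzero element of $\C^\perp$ with the same Hamming weight (the support of $\bx$ does not change under the inclusion $\gf(q)^n\hookrightarrow\gf(q^m)^n$), so $\wt(\bx)\ge d^\perp$. Taking the minimum over all nonzero $\bx\in\C^{(q)\perp}$ yields $d^{(q)\perp}\ge d^\perp$, and in the degenerate case $\C^{(q)\perp}=\{\bzero\}$ the inequality holds vacuously. The one genuine subtlety — and the step I would be most careful about — is justifying the trace realization of $\C^{(q)}$ and checking it is basis-independent, i.e.\ that it really coincides with the matrix construction in the excerpt; this is exactly the content of Theorems~2.1 and~2.6 of \cite{DH}, which we are free to invoke, so the remaining work is the short nondegeneracy argument above. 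An alternative, purely matrix-theoretic route would compare parity-check matrices directly, but the trace description keeps the bookkeeping minimal.
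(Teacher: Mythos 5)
Your argument is correct, and it is essentially the same one that underlies the cited result: the paper itself gives no proof of this lemma, quoting it from \cite[Th.\ 2.7]{DH}, where the key step is exactly your identity $\C^{(q)\perp}=\C^{\perp}\cap\gf(q)^n$ (Delsarte-type duality between the trace code and the subfield subcode of the dual), obtained from the trace representation of $\C^{(q)}$ and the nondegeneracy of the trace form. Since every nonzero codeword of $\C^{(q)\perp}$ is then a nonzero codeword of $\C^{\perp}$ of the same Hamming weight, the inequality $d^{(q)\perp}\geq d^{\perp}$ follows as you state.
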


By definition, the dimension $k'$ of $\C^{(q)}$ satisfies $k'\leq mk$. To the best of our knowledge, the only references 
on subfield codes are \cite{CCD, CCZ, DH}. Recently, some basic 
results about subfield codes were derived and the subfield codes of ovoid codes were studied 
in \cite{DH}. It was demonstrated that the subfield codes of ovoid codes are very attractive 
\cite{DH}.

The first objective of this paper is to investigate the binary subfiled codes $\C_{\cH(f_1)}^{(2)}$ and $\C_{\cH(f_2)}^{(2)}$ of the hyperoval codes $\C_{\cH(f_1)}$ and $\C_{\cH(f_2)}$, respectively. The weight distributions of $\C_{\cH(f_1)}^{(2)}$ and $\C_{\cH(f_2)}^{(2)}$ are determined. The parameters of the duals of $\C_{\cH(f_1)}^{(2)}$ and $\C_{\cH(f_2)}^{(2)}$ are also obtained. As a byproduct, we generalize the binary subfield code $\C_{\cH(f_1)}^{(2)}$ to the $p$-ary case and derive its weight distribution for odd $p$. The second objective of this paper is to study the subfield codes $\C_{\cO}^{(p)}$ of the conic codes $\C_{\cO}$ for odd $p$. The weight distribution of the subfield code $\C_{\cO}^{(p)}$ is also determined. The codes presented in this paper are optimal or almost optimal in many cases. The parameters of the linear codes presented in this paper are new. 

\section{Preliminaries}\label{sect-pre}

In this section, we recall characters and some character sums over finite fields which will be needed later. 

Let $p$ be a prime and $q=p^m$. Let $\gf(q)$ be the finite field with $q$ elements and $\alpha$ a primitive element of $\gf(q)$. Let $\tr_{q/p}$ denote the trace function from $\gf(q)$ to $\gf(p)$ given by
$$\tr_{q/p}(x)=\sum_{i=0}^{m-1}x^{p^{i}},\ x\in \gf(q).$$ Denote by $\zeta_p$ the primitive $p$-th root of complex unity.

An \emph{additive character} of $\gf(q)$ is a function $\chi: (\gf(q),+)\rightarrow \bC^{*}$ such that
$$\chi(x+y)=\chi(x)\chi(y),\ x,y\in \gf(q),$$ where $\bC^{*}$ denotes the set of all nonzero complex numbers. For any $a\in \gf(q)$, the function
$$\chi_{a}(x)=\zeta_{p}^{\tr_{q/p}(ax)},\ x\in \gf(q),$$ defines an additive character of $\gf(q)$. In addition, $\{\chi_{a}:a\in \gf(q)\}$ is a group consisting of all the additive characters of $\gf(q)$. If $a=0$, we have $\chi_0(x)=1$ for all $x\in \gf(q)$ and $\chi_0$ is referred to as the trivial additive character of $\gf(q)$. If $a=1$, we call $\chi_1$ the canonical additive character of $\gf(q)$. Clearly, $\chi_a(x)=\chi_1(ax)$. The orthogonality  relation of additive characters is given by
$$\sum_{x\in \gf(q)}\chi_1(ax)=\left\{
\begin{array}{rl}
q    &   \mbox{ for }a=0,\\
0    &   \mbox{ for }a\in \gf(q)^*.
\end{array} \right. $$

Let $\gf(q)^*=\gf(q)\setminus \{0\}$. A \emph{character} $\psi$ of the multiplicative group $\gf(q)^*$ is a function from  $\gf(q)^*$  to $\bC^{*}$ such that $\psi(xy)=\psi(x)\psi(y)$ for all $(x,y)\in \gf(q)^*\times \gf(q)^*$. Define the multiplication of two characters $\psi,\psi'$ by $(\psi\psi')(x)=\psi(x)\psi'(x)$ for $x\in \gf(q)^*$. All the characters of $\gf(q)^*$ are given by
$$\psi_{j}(\alpha^k)=\zeta_{q-1}^{jk}\mbox{ for }k=0,1,\cdots,q-1,$$
where $0\leq j \leq q-2$. Then all these $\psi_j$, $0\leq j \leq q-2$, form a group under the multiplication of characters and are called \emph{multiplicative characters} of $\gf(q)$. In particular, $\psi_0$ is called the trivial multiplicative character and $\eta:=\psi_{(q-1)/2}$ is referred to as the quadratic multiplicative character of  $\gf(q)$. The orthogonality relation of multiplicative characters is given by
$$\sum_{x\in \gf(q)^*}\psi_j(x)=\left\{
\begin{array}{rl}
q-1    &   \mbox{ for }j=0,\\
0    &   \mbox{ for }j\neq 0.
\end{array} \right. $$

For an additive character $\chi$ and a multiplicative character $\psi$ of $\gf(q)$, the \emph{Gauss sum} $G(\psi, \chi)$ over $\gf(q)$ is defined by
$$G(\psi,\chi)=\sum_{x\in \gf(q)^*}\psi(x)\chi(x).$$
We call $G(\eta,\chi)$ the quadratic Gauss sum over $\gf(q)$ for nontrivial $\chi$. The value of the quadratic Gauss sum is known and documented below.

\begin{lemma}\label{quadGuasssum}\cite[Th. 5.15]{LN}
Let $q=p^m$ with $p$ odd. Let $\chi$ be the canonical additive character of $\gf(q)$. Then
\begin{eqnarray*}G(\eta,\chi)&=&(-1)^{m-1}(\sqrt{-1})^{(\frac{p-1}{2})^2m}\sqrt{q}\\
 &=&\left\{
\begin{array}{lll}
(-1)^{m-1}\sqrt{q}    &   \mbox{ for }p\equiv 1\pmod{4},\\
(-1)^{m-1}(\sqrt{-1})^{m}\sqrt{q}    &   \mbox{ for }p\equiv 3\pmod{4}.
\end{array} \right. \end{eqnarray*}
\end{lemma}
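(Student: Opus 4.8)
The plan is to reduce the evaluation of $G(\eta,\chi)$ over $\gf(q)$ to the quadratic Gauss sum over the prime field $\gf(p)$ by means of the Davenport--Hasse lifting relation, and then to invoke the classical sign determination of the latter. First I would record the two lifting identities connecting the characters on $\gf(q)$ to those on $\gf(p)$. Writing $\eta_0$ for the quadratic character and $\chi'$ for the canonical additive character of $\gf(p)$, the canonical additive character of $\gf(q)$ satisfies $\chi=\chi'\circ\tr_{q/p}$ by the very definition $\chi(x)=\zeta_p^{\tr_{q/p}(x)}$, while the quadratic character of $\gf(q)$ satisfies $\eta=\eta_0\circ\Norm_{q/p}$: the composite $\eta_0\circ\Norm_{q/p}$ is a nontrivial character of order dividing $2$ (the norm being surjective onto $\gf(p)^*$), and the quadratic character is the unique such character. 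With these identifications, the Davenport--Hasse theorem yields $-G(\eta,\chi)=\bigl(-G(\eta_0,\chi')\bigr)^m$, equivalently $G(\eta,\chi)=(-1)^{m-1}G(\eta_0,\chi')^m$, so the whole problem reduces to evaluating the single prime-field sum $g:=G(\eta_0,\chi')=\sum_{x\in\gf(p)^*}\eta_0(x)\zeta_p^{x}$ and raising it to the $m$-th power.

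For the magnitude of $g$ I would carry out the standard double-sum manipulation of $g\overline{g}$ (or of $g^2$): reindexing, using the multiplicativity of $\eta_0$ together with the orthogonality relation of additive characters gives $g^2=\eta_0(-1)\,p=(-1)^{(p-1)/2}p$. Hence $|g|=\sqrt p$, and $g$ is real when $p\equiv 1\pmod 4$ and purely imaginary when $p\equiv 3\pmod 4$; in either case $g\in\{\pm\sqrt p\}$ or $g\in\{\pm i\sqrt p\}$.

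The genuinely hard step, and the main obstacle, is pinning down the sign, namely that $g=+\sqrt p$ for $p\equiv 1\pmod 4$ and $g=+i\sqrt p$ for $p\equiv 3\pmod 4$. This is Gauss's celebrated sign theorem, and the squaring argument above cannot reach it. Here I would cite one of the classical proofs rather than reproduce it, for instance the theta-function/Poisson-summation argument or Schur's evaluation of the eigenvalues of the finite Fourier matrix. Granting the sign, the two cases collapse into the single expression $g=(\sqrt{-1})^{\left(\frac{p-1}{2}\right)^2}\sqrt p$, which one verifies by observing that $\left(\frac{p-1}{2}\right)^2\equiv 0\pmod 4$ when $p\equiv 1\pmod 4$ and $\left(\frac{p-1}{2}\right)^2\equiv 1\pmod 4$ when $p\equiv 3\pmod 4$, matching $(\sqrt{-1})^0=1$ and $(\sqrt{-1})^1=i$ respectively.

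Finally I would substitute back and use $(\sqrt p)^m=\sqrt q$ to obtain
$$G(\eta,\chi)=(-1)^{m-1}g^m=(-1)^{m-1}(\sqrt{-1})^{\left(\frac{p-1}{2}\right)^2 m}(\sqrt p)^m=(-1)^{m-1}(\sqrt{-1})^{\left(\frac{p-1}{2}\right)^2 m}\sqrt q,$$
which is precisely the claimed closed form. Splitting once more on $p\bmod 4$ then recovers the two displayed cases: the exponent $\left(\frac{p-1}{2}\right)^2 m$ is even when $p\equiv 1\pmod 4$, giving $(-1)^{m-1}\sqrt q$, and is congruent to $m$ modulo $4$ when $p\equiv 3\pmod 4$, giving $(-1)^{m-1}(\sqrt{-1})^{m}\sqrt q$.
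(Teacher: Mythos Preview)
Your argument is correct and follows the standard route: Davenport--Hasse lifting to reduce to the prime field, then Gauss's sign determination for the quadratic sum over $\gf(p)$. The paper itself does not prove this lemma; it is quoted without proof from Lidl and Niederreiter \cite[Th.~5.15]{LN}, where exactly this approach (Davenport--Hasse together with the classical evaluation over $\gf(p)$) is used. One small wording issue: in your final case split you say the exponent $\left(\frac{p-1}{2}\right)^2 m$ is ``even'' when $p\equiv 1\pmod 4$, but what you need (and what actually holds, since $\frac{p-1}{2}$ is even in that case) is that it is divisible by $4$, so that $(\sqrt{-1})$ raised to it equals $1$.
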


Let $\chi$ be a nontrivial additive character of $\gf(q)$ and let $f\in \gf(q)[x]$ be a polynomial of positive degree. The character sums of the form
$$\sum_{c\in \gf(q)}\chi(f(c))$$ are referred to as \emph{Weil sums}. The problem of evaluating
such character sums explicitly is very difficult in general. In certain special cases, Weil sums can be treated (see \cite[Section 4 in Chapter 5]{LN}).

If $f$ is a quadratic polynomial and $q$ is odd, the Weil sums have an interesting relationship with quadratic Gauss sums, which is described in the
following lemma.

\begin{lemma}\label{lem-charactersum}\cite[Th. 5.33]{LN}
Let $\chi$ be a nontrivial additive character of $\gf(q)$ with $q$ odd, and let $f(x)=a_2x^2+a_1x+a_0\in \gf(q)[x]$ with $a_2\neq 0$. Then
$$\sum_{c\in \gf(q)}\chi(f(c))=\chi(a_0-a_1^2(4a_2)^{-1})\eta(a_2)G(\eta,\chi).$$
\end{lemma}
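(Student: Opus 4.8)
The plan is to reduce the Weil sum to a pure quadratic Gauss sum by completing the square, which is legitimate precisely because $2$ (hence $2a_2$ and $4a_2$) is invertible in $\gf(q)$ when $q$ is odd. First I would write
$$f(x)=a_2\left(x+a_1(2a_2)^{-1}\right)^2+\left(a_0-a_1^2(4a_2)^{-1}\right),$$
and then substitute $y=x+a_1(2a_2)^{-1}$. Since this substitution is a bijection of $\gf(q)$ and $\chi$ is additive, the constant term factors out of the character and the Weil sum becomes $\chi\!\left(a_0-a_1^2(4a_2)^{-1}\right)\sum_{y\in\gf(q)}\chi(a_2y^2)$. Thus everything comes down to proving the identity $\sum_{y\in\gf(q)}\chi(a_2y^2)=\eta(a_2)G(\eta,\chi)$.

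For that I would count preimages under squaring. For $z\in\gf(q)^*$ the number of $y\in\gf(q)$ with $y^2=z$ equals $1+\eta(z)$ (namely $2$ if $z$ is a nonzero square and $0$ otherwise), while $y=0$ is the unique element with $y^2=0$. Grouping the terms of the sum according to the value $z=y^2$ therefore gives
$$\sum_{y\in\gf(q)}\chi(a_2y^2)=1+\sum_{z\in\gf(q)^*}\bigl(1+\eta(z)\bigr)\chi(a_2z)=\left(1+\sum_{z\in\gf(q)^*}\chi(a_2z)\right)+\sum_{z\in\gf(q)^*}\eta(z)\chi(a_2z).$$
The first bracket is $0$ by the orthogonality relation for additive characters, since $a_2\neq0$. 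In the remaining sum I substitute $w=a_2z$ and use that $\eta$ is multiplicative with values in $\{\pm1\}$, so $\eta(a_2^{-1})=\eta(a_2)$; this yields $\sum_{z\in\gf(q)^*}\eta(z)\chi(a_2z)=\eta(a_2)\sum_{w\in\gf(q)^*}\eta(w)\chi(w)=\eta(a_2)G(\eta,\chi)$ by the definition of the Gauss sum. Combining the two displays proves the claimed formula.

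The argument is entirely elementary. The only places where the hypothesis actually intervenes are the completion of the square (which needs $2$ to be invertible) and the identity $1+\eta(z)$ for the number of square roots of $z$, which is exactly where the quadratic character $\eta$ enters the picture; beyond this there is no real obstacle, only bookkeeping with the orthogonality relations.
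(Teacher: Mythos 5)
Your proof is correct. The paper does not prove this lemma itself but quotes it from Lidl--Niederreiter (Th.\ 5.33), and your argument---completing the square (valid since $q$ is odd) and then evaluating $\sum_{y\in\gf(q)}\chi(a_2y^2)$ by counting square roots via $1+\eta(z)$ and invoking the orthogonality relation---is exactly the standard proof given in that reference, so it matches the approach the paper relies on.
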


If $f$ is a quadratic polynomial with $q$ even, the Weil sums are evaluated explicitly as follows.

\begin{lemma}\label{lem-charactersum-evenq}\cite[Cor. 5.35]{LN}
Let $\chi_b$ be a nontrivial additive character of $\gf(q)$ with $b\in \gf(q)^*$, and let $f(x)=a_2x^2+a_1x+a_0\in \gf(q)[x]$ with $q$ even. Then
$$\sum_{c\in \gf(q)}\chi_b(f(c))=\left\{\begin{array}{ll}
\chi_b(a_0)q    &   \mbox{ if }a_2=ba_{1}^{2},\\
0    &   \mbox{ otherwise. }
\end{array} \right.$$
\end{lemma}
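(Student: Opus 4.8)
The plan is to reduce this Weil sum to an elementary linear additive-character sum by exploiting that $\gf(q)$ has characteristic $2$. Since $p=2$, the canonical additive character takes values in $\{\pm 1\}$ and $\chi_b(y)=\chi_1(by)=(-1)^{\tr_{q/2}(by)}$, so the entire computation can be carried out at the level of the $\gf(2)$-linear trace form $y\mapsto \tr_{q/2}(by)$.

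First I would linearize the quadratic term. In characteristic $2$ the Frobenius map $x\mapsto x^2$ is a bijection of $\gf(q)$, so $ba_2$ has a unique square root $d:=(ba_2)^{q/2}\in\gf(q)$ (indeed $d^2=(ba_2)^q=ba_2$), and the identity $\tr_{q/2}(y^2)=\tr_{q/2}(y)$ (which follows from $y^q=y$ and re-indexing the sum $\sum_{i=0}^{m-1}y^{2^i}$ defining the trace) gives
$$\tr_{q/2}(ba_2c^2)=\tr_{q/2}(d^2c^2)=\tr_{q/2}\big((dc)^2\big)=\tr_{q/2}(dc).$$
Hence $\tr_{q/2}(bf(c))=\tr_{q/2}((d+ba_1)c)+\tr_{q/2}(ba_0)$. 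Factoring out the constant contribution and invoking the orthogonality relation for additive characters recalled in Section~\ref{sect-pre} yields
$$\sum_{c\in\gf(q)}\chi_b(f(c))=\chi_b(a_0)\sum_{c\in\gf(q)}\chi_1\big((d+ba_1)c\big)=\left\{\begin{array}{ll}\chi_b(a_0)\,q & \mbox{if }d+ba_1=0,\\ 0 & \mbox{otherwise.}\end{array}\right.$$

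Finally I would translate the vanishing condition $d=ba_1$ back in terms of the coefficients: squaring gives $ba_2=b^2a_1^2$, and dividing by $b\in\gf(q)^*$ gives $a_2=ba_1^2$, which is precisely the dichotomy in the statement (note that no hypothesis $a_2\neq 0$ is needed, the argument covering the degenerate cases uniformly). No step is a serious obstacle; the one point that genuinely matters is the standard fact that squaring is bijective on $\gf(q)$ and preserves the trace, which is exactly what makes the quadratic Weil sum collapse to a linear one in even characteristic, in sharp contrast to the odd case of Lemma~\ref{lem-charactersum} where a nontrivial quadratic Gauss sum survives.
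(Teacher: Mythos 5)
Your proof is correct. Note first that the paper does not actually prove this statement: it is quoted verbatim from Lidl--Niederreiter as Corollary 5.35, so there is no in-paper argument to match against. The standard derivation (the one in the cited source, and the one most natural given what the paper already sets up) is to observe that $a_2x^2+a_1x$ is an affine $2$-polynomial and to specialize Lemma~\ref{lem-p-polynomial} with $p=2$, $r=1$: the vanishing condition $ba_r+b^{p^{r}}a_0^{p^{r}}=0$ there becomes $ba_2+b^2a_1^2=0$, i.e.\ $a_2=ba_1^2$ after dividing by $b$. You instead reprove the special case from scratch by extracting the unique square root $d=(ba_2)^{q/2}$ of $ba_2$ and using $\tr_{q/2}(y^2)=\tr_{q/2}(y)$ to collapse $\tr_{q/2}(ba_2c^2)$ to $\tr_{q/2}(dc)$, reducing everything to the orthogonality relation for the linear character $c\mapsto\chi_1((d+ba_1)c)$; the final equivalence $d=ba_1\Leftrightarrow a_2=ba_1^2$ follows from injectivity of squaring. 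Both arguments rest on the same underlying fact (Frobenius linearity of $x\mapsto x^2$ in characteristic $2$), but yours is self-contained and makes the mechanism explicit, including the degenerate cases $a_2=0$ or $a_1=0$, whereas the citation route gets the result in one line from the more general lemma the paper already records. No gaps; the step $\tr_{q/2}(y^2)=\tr_{q/2}(y)$ and the bijectivity of squaring are exactly the points that need to be said, and you say them.
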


The Weil sums can also be evaluated explicitly in the case that $f$ is an affine $p$-polynomial over $\gf(q)$.

\begin{lemma}\label{lem-p-polynomial}\cite[Th. 5.34]{LN}
Let $q=p^m$ and let
$$f(x)=a_rx^{p^r}+a_{r-1}x^{p^{r-1}}+\cdots+a_1x^{p}+a_0x+a$$ be an affine $p$-polynomial over $\gf(q)$. Let $\chi_b$ be a nontrivial additive character of $\gf(q)$ with $b\in \gf(q)^*$. Then
$$\sum_{c\in \gf(q)}\chi_b(f(c))=\left\{\begin{array}{ll}
\chi_b(a)q    &   \mbox{ if }ba_r+b^pa_{r-1}^p+\cdots+b^{p^{r-1}}a_{1}^{p^{r-1}}+b^{p^{r}}a_{0}^{p^{r}}=0,\\
0    &   \mbox{ otherwise. }
\end{array} \right.$$
\end{lemma}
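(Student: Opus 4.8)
The plan is to reduce this Weil sum to a character sum over the image of the associated linearized map and then invoke the orthogonality relation for additive characters recalled in Section~\ref{sect-pre}.

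First I would peel off the affine constant. Writing $L(x)=a_rx^{p^r}+a_{r-1}x^{p^{r-1}}+\cdots+a_1x^{p}+a_0x$ for the linearized part, so that $f(x)=L(x)+a$, the additivity of $\chi_b$ gives
$$\sum_{c\in \gf(q)}\chi_b(f(c))=\chi_b(a)\sum_{c\in \gf(q)}\chi_b(L(c)).$$
Since $L\colon \gf(q)\to \gf(q)$ is $\gf(p)$-linear, the composite $c\mapsto \chi_b(L(c))$ is an additive character of the additive group $(\gf(q),+)$. Hence, by the orthogonality relation, the inner sum equals $q$ if $\chi_b\circ L$ is the trivial character and $0$ otherwise. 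Everything thus reduces to deciding exactly when $\chi_b\circ L$ is trivial.

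Next I would make the triviality condition explicit via the trace. Write $\chi_b(L(c))=\zeta_p^{\tr_{q/p}(bL(c))}$ and simplify the exponent term by term: using $\tr_{q/p}(\beta)=\tr_{q/p}(\beta^{p})$ repeatedly (so $\tr_{q/p}(\beta)=\tr_{q/p}(\beta^{p^{m-i}})$) together with $c^{p^{m}}=c$, each summand becomes $\tr_{q/p}\bigl(ba_ic^{p^i}\bigr)=\tr_{q/p}\bigl((ba_i)^{p^{m-i}}c\bigr)$. Summing, $\tr_{q/p}(bL(c))=\tr_{q/p}(cd)$ with $d=\sum_{i=0}^{r}(ba_i)^{p^{m-i}}$. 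Because the bilinear form $(x,y)\mapsto \tr_{q/p}(xy)$ on $\gf(q)$ is nondegenerate, $\tr_{q/p}(cd)=0$ for all $c\in\gf(q)$ exactly when $d=0$; that is, $\chi_b\circ L$ is trivial iff $d=0$.

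Finally I would massage $d=0$ into the form stated in the lemma. Raising $d=\sum_{i=0}^{r}(ba_i)^{p^{m-i}}$ to the $p^r$-th power and using $x^{p^{m}}=x$ turns each exponent $p^{m-i}$ into $p^{r-i}$, giving
$$d^{p^r}=ba_r+b^pa_{r-1}^p+\cdots+b^{p^{r-1}}a_{1}^{p^{r-1}}+b^{p^{r}}a_{0}^{p^{r}},$$
which is precisely the expression appearing in the statement. Since $x\mapsto x^{p^r}$ is a bijection of $\gf(q)$, the vanishing of this expression is equivalent to $d=0$. Combining this with the dichotomy of the second paragraph and restoring the factor $\chi_b(a)$ yields the claimed evaluation. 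The only point that demands care is the index bookkeeping in the last two steps: one must track which power of $p$ occurs and confirm that reducing exponents (via $x^{p^{m}}=x$, or by working inside $\tr_{q/p}$) is legitimate at each stage; there is no genuine analytic difficulty.
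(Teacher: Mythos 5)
Your argument is correct and is essentially the standard proof of this result; the paper states the lemma without proof, citing Lidl--Niederreiter, Theorem 5.34, whose argument follows the same reduction (split off the affine constant, recognize $c\mapsto\chi_b(L(c))$ as an additive character of $(\gf(q),+)$, and characterize its triviality via the adjoint linearized polynomial and the nondegeneracy of the trace form). Your index bookkeeping in the last step also checks out: $d^{p^r}=\sum_{i=0}^{r}(ba_i)^{p^{r-i}}$ matches the displayed condition term by term, and Frobenius being bijective makes $d^{p^r}=0$ equivalent to $d=0$.
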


\section{The subfield codes of the translation and Segre hyperoval codes}\label{set-hyperovalcode}

Let $q=2^m$ with $m>1$. By Theorem \ref{th-hyperovals},  any hyperoval $\cH$ in $\PG(2,\gf(q))$ can be written in the form
$$\cH(f)=\{(f(c),c,1):c\in \gf(q)\}\cup \{(1,0,0)\}\cup\{(0,1,0)\},$$ where $f\in \gf(q)[x]$ is an o-polynomial. Let $\gf(q)=\{x_1,x_2,\cdots,x_{q}\}$. For a hyperoval $\cH(f)$ in $\PG(2,\gf(q))$, we denote
$$G_{\cH(f)}=\begin{bmatrix} f(x_1) & f(x_2) & \cdots & f(x_{q}) & 1 & 0\\
 x_1 & x_2 & \cdots & x_q & 0 & 1\\
 1 & 1 & \cdots & 1 & 0 & 0\\ \end{bmatrix}.$$
Let $\C_{\cH(f)}$ be the $[q+2,3]$ code with generator matrix $G_{\cH(f)}$. Denote by $\C_{\cH(f)}^{\perp}$ and $\C_{\cH(f)}^{(2)\perp}$ the dual codes of $\C_{\cH(f)}$ and its subfield code $\C_{\cH(f)}^{(2}$, respectively.

To give the trace representation of $\C_{\cH(f)}^{(2)}$, we recall the following result.
\begin{lemma}\label{th-tracerepresentation}\cite[Th. 2.5]{DH}
Let $\C$ be an $[n,k]$ code over $\gf(q^m)$. Let $G=[g_{ij}]_{1\leq i \leq k, 1\leq j \leq n}$ be a generator matrix of $\C$. Then the trace representation of the subfield code $\C^{(q)}$ is given by
$$
\C^{(q)}=\left\{\left(\tr_{q^m/q}\left(\sum_{i=1}^{k}a_ig_{i1}\right),
\cdots,\tr_{q^m/q}\left(\sum_{i=1}^{k}a_ig_{in}\right)\right):a_1,\ldots,a_k\in \gf(q^m)\right\}.
$$
\end{lemma}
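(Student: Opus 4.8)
The final statement to prove is Lemma~\ref{th-tracerepresentation}, the trace representation of a subfield code. I would set up the proof around the observation that the subfield code $\C^{(q)}$ is, by its very construction, the row span over $\gf(q)$ of a certain $km \times n$ matrix obtained by ``unfolding'' each entry of $G$ along a fixed basis of $\gf(q^m)/\gf(q)$. The plan is to show that the row span of this unfolded matrix coincides, coordinate by coordinate, with the set on the right-hand side of the claimed identity. To do this I would introduce a $\gf(q)$-basis $\{\beta_1,\dots,\beta_m\}$ of $\gf(q^m)$ together with its dual (trace-orthogonal) basis $\{\beta_1^*,\dots,\beta_m^*\}$, characterized by $\tr_{q^m/q}(\beta_s \beta_t^*) = \delta_{st}$; the existence of a dual basis is standard. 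Writing $g_{ij} = \sum_{s=1}^m g_{ij}^{(s)} \beta_s$ with $g_{ij}^{(s)} \in \gf(q)$, the unfolded matrix has rows indexed by pairs $(i,s)$ with $1 \le i \le k$, $1 \le s \le m$, and $(i,s)$-th row equal to $(g_{i1}^{(s)},\dots,g_{in}^{(s)})$.

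Next I would compute an arbitrary $\gf(q)$-linear combination of these rows, say with coefficients $b_{i,s} \in \gf(q)$, and recognize each coordinate as $\sum_{i,s} b_{i,s} g_{ij}^{(s)}$. The key algebraic step is to express this using the trace: since $g_{ij}^{(s)} = \tr_{q^m/q}(g_{ij}\beta_s^*)$ by the dual basis property, we get the $j$-th coordinate equal to $\tr_{q^m/q}\!\big(\sum_i g_{ij} \sum_s b_{i,s}\beta_s^*\big) = \tr_{q^m/q}\!\big(\sum_i a_i g_{ij}\big)$ where $a_i := \sum_s b_{i,s}\beta_s^* \in \gf(q^m)$. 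As the $b_{i,s}$ range over all of $\gf(q)^{km}$, the tuple $(a_1,\dots,a_k)$ ranges over all of $\gf(q^m)^k$ (because $\{\beta_s^*\}$ is a $\gf(q)$-basis), and this correspondence is a bijection. Hence the row span of the unfolded matrix is exactly $\{(\tr_{q^m/q}(\sum_i a_i g_{i1}),\dots,\tr_{q^m/q}(\sum_i a_i g_{in})) : a_1,\dots,a_k \in \gf(q^m)\}$, which is the desired description.

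Finally I would note that this argument simultaneously reproves the basis-independence already quoted from \cite{DH}: the right-hand set in the statement makes no reference to a basis at all, so any two choices of basis in the unfolding construction yield the same code. I would also remark that the trace map used is $\tr_{q^m/q}$ (not $\tr_{q^m/p}$), which is the relevant one when passing from $\gf(q^m)$ down to the subfield $\gf(q)$, and that in our applications $q$ will itself be a prime $p$ or the field $\gf(2)$, so $\tr_{q^m/q}$ specializes accordingly.

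The main obstacle, though it is a mild one, is bookkeeping with the two bases: one must be careful that the unfolding in the definition of $\C^{(q)}$ uses $\{\beta_s\}$ while the trace-extraction of coordinates $g_{ij}^{(s)} = \tr_{q^m/q}(g_{ij}\beta_s^*)$ uses the dual basis $\{\beta_s^*\}$, and that the resulting reparametrization $(b_{i,s}) \mapsto (a_i)$ is genuinely onto $\gf(q^m)^k$. Once the dual basis is introduced cleanly, everything else is a one-line interchange of summation inside the (additive, $\gf(q)$-linear) trace function. Since Lemma~\ref{th-tracerepresentation} is cited verbatim from \cite[Th.~2.5]{DH}, a full self-contained proof may not even be required here; but the sketch above is exactly how I would reconstruct it.
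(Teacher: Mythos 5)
Your proposal is correct: the dual-basis argument (unfold $G$ along $\{\beta_s\}$, recover the coordinates as $g_{ij}^{(s)}=\tr_{q^m/q}(g_{ij}\beta_s^*)$, and reparametrize $\gf(q)$-linear combinations of rows via the bijection $(b_{i,s})\mapsto a_i=\sum_s b_{i,s}\beta_s^*$) is exactly the standard proof of this identity. The paper itself gives no proof, quoting the result verbatim from \cite[Th.~2.5]{DH}, and your sketch reconstructs the argument of that reference faithfully, including the correct observation that the basis-free form of the right-hand side re-derives the independence of $\C^{(q)}$ from the choice of basis and generator matrix.
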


Lemma \ref{th-tracerepresentation} directly gives the following trace representation of $\C_{\cH(f)}^{(2)}$: 
\begin{eqnarray}\label{eqn-tracerepresentation}
\C_{\cH(f)}^{(2)}=\left\{\left((\tr_{q/2}(af(x)+bx)+c)_{x\in \gf(q)},\tr_{q/2}(a),\tr_{q/2}(b)\right):a,b\in \gf(q),c\in \gf(2)\right\}.
\end{eqnarray}

Let $f_1(x)=x^2\in\gf(q)[x]$.  Let $f_2(x)=x^6\in\gf(q)[x]$ for odd $m$. Then $f_1$ 
and $f_2$ are the translation and Segre o-polynomials and  $\C_{\cH(f_1)}$ and 
$\C_{\cH(f_2)}$ are the corresponding hyperoval codes. In the following, we investigate 
the subfield codes of $\C_{\cH(f_1)}$ and $\C_{\cH(f_2)}$.

\subsection{The subfield code of $\C_{\cH(f_1)}$}\label{subsection-main1}

By Equation (\ref{eqn-tracerepresentation}), the trace representation of $\C_{\cH(f_1)}^{(2)}$ is
\begin{eqnarray*}
\C_{\cH(f_1)}^{(2)}=\left\{\left((\tr_{q/2}(ax^2+bx)+c)_{x\in \gf(q)},\tr_{q/2}(a),\tr_{q/2}(b)\right):a,b\in \gf(q),c\in \gf(2)\right\}.
\end{eqnarray*}

\begin{theorem}\label{mian-1}
Let $m>1$ and $q=2^m$. Then $\C_{\cH(f_1)}^{(2)}$ is a $[2^m+2,m+2,2]$ binary linear code with the weight distribution in Table \ref{tab-1}. Its dual has parameters $[2^m+2,2^m-m,4]$ and 
is distance-optimal according to the sphere-packing bound.
\end{theorem}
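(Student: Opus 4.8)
The plan is to compute the weight distribution directly from the trace representation
$$\C_{\cH(f_1)}^{(2)}=\left\{\left((\tr_{q/2}(ax^2+bx)+c)_{x\in \gf(q)},\tr_{q/2}(a),\tr_{q/2}(b)\right):a,b\in \gf(q),c\in \gf(2)\right\},$$
and then read off the parameters of the dual via the sphere-packing bound. First I would fix a codeword $\bc(a,b,c)$ and express its weight on the first $q$ coordinates as a character sum: writing $N(a,b,c)=\#\{x\in\gf(q):\tr_{q/2}(ax^2+bx)+c=0\}$, we have $N(a,b,c)=\tfrac12\sum_{x\in\gf(q)}\bigl(1+(-1)^c\chi_1(ax^2+bx)\bigr)$ where $\chi_1$ is the canonical additive character of $\gf(q)$. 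Since $ax^2+bx$ is a quadratic (hence affine $2$-polynomial once we note $ax^2=(\sqrt a\,x)^2$ is a linearized square up to the linear term $bx$), Lemma~\ref{lem-charactersum-evenq}, or equivalently Lemma~\ref{lem-p-polynomial} with $p=2$, $r=1$, evaluates $\sum_{x\in\gf(q)}\chi_1(ax^2+bx)$: it equals $q$ when $a=b^2$ and $0$ otherwise (using that the defining condition $b a_r + b^{2}a_0^{2}$-type relation here becomes $a=b^2$, with $a_0$ the coefficient of $x$). Thus the weight contributed by the first $q$ coordinates is $q-N(a,b,c)$, which is $0$ or $q$ when $a=b^2$ (according to whether $c=0$ or the sign forces all-ones), and exactly $q/2$ whenever $a\neq b^2$.

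Next I would organize the count by cases on whether $a=b^2$, and track the two extra coordinates $\tr_{q/2}(a),\tr_{q/2}(b)$ and the parameter $c$. There are $2^{m+2}$ codewords total (parameters $a,b\in\gf(q)$, $c\in\gf(2)$, and the map is injective since the two appended trace coordinates plus the quadratic data recover $a,b,c$ — this also pins down the dimension $m+2$, so one checks injectivity as a preliminary step). In the generic case $a\neq b^2$, which accounts for $q^2-q$ choices of $(a,b)$ and both values of $c$, the first block has weight $q/2$; the total weight is $q/2+\epsilon_1+\epsilon_2$ where $\epsilon_i\in\{0,1\}$ record the two appended bits, so weights $q/2,\ q/2+1,\ q/2+2$ arise, and I would count how many $(a,b)$ with $a\neq b^2$ have each prescribed pair $(\tr_{q/2}(a),\tr_{q/2}(b))$ — a routine linear-algebra count giving blocks of size roughly $(q^2-q)/4$, adjusted by the constraint $a\neq b^2$. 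In the special case $a=b^2$ (so $q$ choices of $b$), the first block is all-zero or all-one, giving weights $0, 2, q, q+2$ combined with the appended bits and $c$; here $\tr_{q/2}(a)=\tr_{q/2}(b^2)=\tr_{q/2}(b)$, so the two appended coordinates are equal, which collapses some cases. Assembling all contributions yields the weight distribution in Table~\ref{tab-1}; in particular the minimum weight is $2$, realized e.g. by the codeword with $a=b=0$, $c=0$ but one appended coordinate nonzero — wait, that is impossible, so instead the minimum-weight words come from $a=b^2\neq 0$ with the first block zero and the two equal appended bits both nonzero, giving weight exactly $2$, and one checks no weight-$1$ word exists.

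For the dual, the length is $n=2^m+2$ and the dimension is $n-(m+2)=2^m-m$. The minimum distance $d^\perp$ of $\C_{\cH(f_1)}^{(2)\perp}$ satisfies $d^\perp\geq 2$ automatically (no zero columns in the generator matrix, as the last two columns are standard basis vectors and the first $q$ columns are nonzero), and in fact $d^\perp\geq 3$ since no two columns of $G_{\cH(f_1)}^{(2)}$ are equal or scalar multiples — over $\gf(2)$ this means no two columns coincide, which follows because the hyperoval points are pairwise distinct projectively and the subfield expansion preserves distinctness of the relevant coordinate patterns; more carefully, two columns coinciding would force $\tr_{q/2}(a(x^2-y^2)+b(x-y))=0$ for all $a,b$ at two fixed points $x,y$, impossible for $x\neq y$. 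To get $d^\perp\geq 4$ I would show no three columns of the generator matrix sum to zero over $\gf(2)$: a dependency among three columns indexed by field elements $x,y,z$ (possibly including the two special columns) translates, via the full-rank trace conditions, into $x+y+z=0$ and $x^2+y^2+z^2=0$ simultaneously, forcing $(x+y+z)^2=x^2+y^2+z^2+2(\cdots)=0$ to be consistent only when two of them are equal — a contradiction; the cases involving the special columns $(1,0,0)^T,(0,1,0)^T$ are handled separately and also fail. Hence $d^\perp\geq 4$. Finally, applying the sphere-packing (Hamming) bound to an $[2^m+2,2^m-m,5]$ code would require $\sum_{i=0}^{2}\binom{2^m+2}{i}\leq 2^{m+2}$, i.e. $1+(2^m+2)+\binom{2^m+2}{2}\leq 2^{m+2}$, which fails for all $m>1$; therefore no $[2^m+2,2^m-m,5]$ code exists, so $d^\perp=4$ exactly and the dual is distance-optimal. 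The main obstacle is the bookkeeping in the special case $a=b^2$ where several potential weights coincide and one must be careful not to double-count; the Weil-sum evaluation itself is immediate from Lemma~\ref{lem-charactersum-evenq}.
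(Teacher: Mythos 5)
Your overall strategy --- evaluating $\sum_{x}\chi_1(ax^2+bx)$ via Lemma~\ref{lem-charactersum-evenq}, splitting into the cases $a=b^2$ and $a\neq b^2$, using $\tr_{q/2}(b^2)=\tr_{q/2}(b)$ to handle the two appended coordinates, and closing with the sphere-packing bound to pin down $d^{(2)\perp}=4$ --- is exactly the paper's. But there is one genuine error that propagates into the dimension and all the multiplicities: the parametrization $(a,b,c)\mapsto\bc(a,b,c)$ is \emph{not} injective. The parameter space $\gf(q)\times\gf(q)\times\gf(2)$ has $2^{2m+1}$ elements, so injectivity would force dimension $2m+1$, contradicting the claimed $m+2$ (your own sentence asserts both ``$2^{m+2}$ codewords total'' and injectivity of a map from a set of size $2^{2m+1}$). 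In fact the zero codeword arises exactly from the triples $(b^2,b,0)$ with $\tr_{q/2}(b)=0$ --- there are $2^{m-1}$ of them, since $\tr_{q/2}(ax^2+bx)=\tr_{q/2}((a^{1/2}+b)x)$ vanishes identically iff $a=b^2$, and then $\tr_{q/2}(a)=\tr_{q/2}(b)$ automatically --- so the map is $2^{m-1}$-to-one. This is precisely the observation the paper uses to get the dimension $m+2$ and to convert the raw counts over $(a,b,c)$ into the multiplicities of Table~\ref{tab-1} (each entry is the raw count divided by $2^{m-1}$; e.g.\ weight $2$ has multiplicity $1$, not $2^{m-1}$). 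Without this your bookkeeping cannot produce the stated table.

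A smaller but real slip is in your direct argument that no three columns of the binary generator matrix are dependent: in characteristic $2$ one has $(x+y+z)^2=x^2+y^2+z^2$ with no cross terms, so the conditions $x+y+z=0$ and $x^2+y^2+z^2=0$ are equivalent rather than contradictory, and the ``contradiction'' you derive evaporates. The actual obstruction for three columns from the first block is the constant row: each such column carries a $1$ in the coordinates coming from the third row of $G_{\cH(f_1)}$, and $1+1+1\neq 0$ over $\gf(2)$; the cases involving the two special columns then need the separate checks you allude to. The paper sidesteps all of this by citing Lemma~\ref{th-dualdistance} together with Theorem~\ref{th-hyperovalcode}: the dual of the hyperoval code over $\gf(q)$ is MDS with minimum distance $4$, so $d^{(2)\perp}\geq 4$ for free. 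Your sphere-packing computation for the upper bound $d^{(2)\perp}\leq 4$ is correct and matches the paper.
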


\begin{table}[ht]
\begin{center}
\caption{The weight distribution of $\C_{\cH(f_1)}^{(2)}$}\label{tab-1}
\begin{tabular}{cc} \hline
Weight  &  Multiplicity   \\ \hline
$0$          &  $1$ \\
$2$  &  $1$ \\
$2^{m}$  & $1$ \\
$2^{m}+2$    & $1$ \\
$2^{m-1}$    & $2(2^{m-1}-1)$ \\
$2^{m-1}+1$    & $2^{m+1}$ \\
$2^{m-1}+2$    & $2(2^{m-1}-1)$ \\
\hline
\end{tabular}
\end{center}
\end{table}

\begin{proof}
Let $\chi$ be the canonical additive character of $\gf(q)$.
Denote
$$N_0(a,b)=\sharp \{x\in \gf(q):\tr_{q/2}(ax^2+bx)=0\}.$$ By the orthogonality relation of additive characters and Lemma \ref{lem-charactersum-evenq}, we have
\begin{eqnarray*}\label{eqn-1}
\nonumber 2N_0(a,b)&=&\sum_{z\in \gf(2)}\sum_{x\in \gf(q)}(-1)^{z\tr_{q/2}(ax^2+bx)}\\
\nonumber&=&q+\sum_{x\in \gf(q)}\chi(ax^2+bx)\\
&=&\left\{\begin{array}{ll}
2q   &   \mbox{ if }a=b^{2},\\
q    &   \mbox{ otherwise. }\\
\end{array} \right.
\end{eqnarray*}
Note that $\tr_{q/2}(b^2)=\tr_{q/2}(b)$. For any codeword $$\bc(a,b,c)=\left((\tr_{q/2}(ax^2+bx)+c)_{x\in \gf(q)},\tr_{q/2}(a),\tr_{q/2}(b)\right)\in \C_{\cH(f_1)}^{(2)},$$ we discuss its Hamming weight in the following two cases.
\begin{enumerate}
\item[$\bullet$] If $c=0$, then we have
\begin{eqnarray*}
\wt(\bc(a,b,c))&=&
\left\{\begin{array}{ll}
q-N_0(a,b)   &   \mbox{ for }a=b^{2},\ \tr_{q/2}(a)=\tr_{q/2}(b)=0 \\
q-N_0(a,b)+2    &   \mbox{ for }a=b^{2},\ \tr_{q/2}(a)=\tr_{q/2}(b)\neq0 \\
q-N_0(a,b)   &   \mbox{ for }a\neq b^{2},\ \tr_{q/2}(a)=\tr_{q/2}(b)=0 \\
q-N_0(a,b)+1 & \myatop{\mbox{ for $a\neq b^{2},\ \tr_{q/2}(a)=0,\ \tr_{q/2}(b)\neq0,$}}{\mbox{ or $a\neq b^{2},\ \tr_{q/2}(a)\neq 0,\ \tr_{q/2}(b)=0$}} \\
q-N_0(a,b)+2    &   \mbox{ for }a\neq b^{2},\ \tr_{q/2}(a)\neq 0,\ \tr_{q/2}(b)\neq0 \\
\end{array} \right.\\
&=&\left\{\begin{array}{ll}
0 &\mbox{ for }a=b^{2},\ \tr_{q/2}(a)=\tr_{q/2}(b)=0,\\
2 &\mbox{ for }a=b^{2},\ \tr_{q/2}(a)=\tr_{q/2}(b)\neq0,\\
2^{m-1}  &\mbox{ for }a\neq b^{2},\ \tr_{q/2}(a)=\tr_{q/2}(b)=0,\\
2^{m-1}+1 &\myatop{\mbox{ for $a\neq b^{2},\ \tr_{q/2}(a)=0,\ \tr_{q/2}(b)\neq0,$}}{\mbox{ or $a\neq b^{2},\ \tr_{q/2}(a)\neq 0,\ \tr_{q/2}(b)=0,$}}\\
2^{m-1}+2 &\mbox{ for }a\neq b^{2},\ \tr_{q/2}(a)\neq 0,\ \tr_{q/2}(b)\neq0.
\end{array} \right.\\
\end{eqnarray*}
\item[$\bullet$] If $c=1$, then we have
\begin{eqnarray*}
\wt(\bc(a,b,c))&=&
\left\{\begin{array}{ll}
N_0(a,b)   &   \mbox{ for }a=b^{2},\ \tr_{q/2}(a)=\tr_{q/2}(b)=0 \\
N_0(a,b)+2    &   \mbox{ for }a=b^{2},\ \tr_{q/2}(a)=\tr_{q/2}(b)\neq0 \\
N_0(a,b)   &   \mbox{ for }a\neq b^{2},\ \tr_{q/2}(a)=\tr_{q/2}(b)=0 \\
N_0(a,b)+1 & \myatop{\mbox{ for $a\neq b^{2},\ \tr_{q/2}(a)=0,\ \tr_{q/2}(b)\neq0,$}}{\mbox{ or $a\neq b^{2},\ \tr_{q/2}(a)\neq 0,\ \tr_{q/2}(b)=0$}}\\
N_0(a,b)+2    &   \mbox{ for }a\neq b^{2},\ \tr_{q/2}(a)\neq 0,\ \tr_{q/2}(b)\neq0 \\
\end{array} \right.\\
&=&\left\{\begin{array}{ll}
2^m &\mbox{ for }a=b^{2},\ \tr_{q/2}(a)=\tr_{q/2}(b)=0,\\
2^m+2 &\mbox{ for }a=b^{2},\ \tr_{q/2}(a)=\tr_{q/2}(b)\neq0,\\
2^{m-1}  &\mbox{ for }a\neq b^{2},\ \tr_{q/2}(a)=\tr_{q/2}(b)=0,\\
2^{m-1}+1 &\myatop{\mbox{ for $a\neq b^{2},\ \tr_{q/2}(a)=0,\ \tr_{q/2}(b)\neq0,$}}{\mbox{ or $a\neq b^{2},\ \tr_{q/2}(a)\neq 0,\ \tr_{q/2}(b)=0,$}}\\
2^{m-1}+2 &\mbox{ for }a\neq b^{2},\ \tr_{q/2}(a)\neq 0,\ \tr_{q/2}(b)\neq0.
\end{array} \right.\\
\end{eqnarray*}
\end{enumerate}
Observe that the Hamming weight 0 occurs $2^{m-1}$ times if $(a,b,c)$ runs through $\gf(q)\times \gf(q)\times \gf(2)$. Thus every codeword in $\C_{\cH(f_1)}^{(2)}$ repeats $2^{m-1}$ times. Based on the discussions above, we easily deduce the weight distribution of $\C_{\cH(f_1)}$.

Note that the dual of $\C_{\cH(f_1)}^{(2)}$ has length $2^m+2$ and dimension $2^m-m$. By Theorem  \ref{th-hyperovalcode} and Lemma \ref{th-dualdistance}, the minimal distance of  $\C_{\cH(f_1)}^{(2)}$ satisfies $d^{(2)\perp}\geq 4$. By the sphere-packing bound,
$$2^{2^{m}+2}\geq 2^{2^{m}-m}\left(\sum_{i=0}^{^{\left\lfloor\frac{d^{(2)\perp}-1}{2}\right\rfloor}}\binom{2^{m}+2}{i}\right),$$
then $d^{(2)\perp}\leq 4$. Hence $d^{(2)\perp}= 4$. We complete the proof.
\end{proof}

\begin{example}\label{exa-1}
Let $m=2$. Then $\C_{\cH(f_1)}^{(2)}$ in Theorem \ref{mian-1} is a $[6,4,2]$ binary linear code and its dual has parameters $[6,2,4]$. Both of $\C_{\cH(f_1)}^{(2)}$ and its dual are optimal according to the tables of best codes known maintained at http://www.codetables.de.
\end{example}

\subsection{The subfield code of $\C_{\cH(f_2)}$}

Let $q=2^m$ with $m$ odd.
By Equation (\ref{eqn-tracerepresentation}), the trace representation of $\C_{\cH(f_2)}^{(2)}$ is
\begin{eqnarray*}
\C_{\cH(f_2)}^{(2)}=\left\{\left((\tr_{q/2}(ax^6+bx)+c)_{x\in \gf(q)},\tr_{q/2}(a),\tr_{q/2}(b)\right):a,b\in \gf(q),c\in \gf(2)\right\}.
\end{eqnarray*}

\begin{lemma}\label{lem-numberofb}
Let $q=2^m$ with $m$ odd. Let $a\in \gf(q)^*$ and let $y_a$ be the unique solution of $g(y)=ay^6=1$ in $\gf(q)^*$. Then the following statements hold.
\begin{enumerate}
\item[(1)] $$\sharp \{b\in \gf(q):\tr_{q/2}(b)=0\mbox{ and }\tr_{q/2}(by_a)=0\}=\left\{\begin{array}{ll}
2^{m-1}  & \mbox{ if }a=1,\\
2^{m-2} & \mbox{ if }a\neq1.
\end{array} \right.$$
\item[(2)]
$$\sharp \{b\in \gf(q):\tr_{q/2}(b)=1\mbox{ and }\tr_{q/2}(by_a)=0\}=\left\{\begin{array}{ll}
0  & \mbox{ if }a=1,\\
2^{m-2} & \mbox{ if }a\neq1.
\end{array} \right.$$
\item[(3)] $$\sharp \{b\in \gf(q):\tr_{q/2}(b)=0\mbox{ and }\tr_{q/2}(by_a)=1\}=\left\{\begin{array}{ll}
0  & \mbox{ if }a=1,\\
2^{m-2} & \mbox{ if }a\neq1.
\end{array} \right.$$
\item[(4)]
$$\sharp \{b\in \gf(q):\tr_{q/2}(b)=1\mbox{ and }\tr_{q/2}(by_a)=1\}=\left\{\begin{array}{ll}
2^{m-1}  & \mbox{ if }a=1,\\
2^{m-2} & \mbox{ if }a\neq1.
\end{array} \right.$$
\end{enumerate}
\end{lemma}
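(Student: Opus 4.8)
The plan is to compute each of the four counts as a character sum over the additive characters of $\gf(2)$, exploiting the fact that the indicator of a trace condition $\tr_{q/2}(u)=\epsilon$ equals $\frac{1}{2}\bigl(1+(-1)^{\epsilon}(-1)^{\tr_{q/2}(u)}\bigr)$. Writing $N_{\epsilon_1,\epsilon_2}(a)$ for the cardinality in part $(\epsilon_1+1,\epsilon_2+1)$ loosely, I would set
\begin{eqnarray*}
\sum_{b\in\gf(q)}\frac{1+(-1)^{\epsilon_1}(-1)^{\tr_{q/2}(b)}}{2}\cdot\frac{1+(-1)^{\epsilon_2}(-1)^{\tr_{q/2}(by_a)}}{2},
\end{eqnarray*}
and expand the product into four sums. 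Three of these four sums are immediate: $\sum_b 1=q$, $\sum_b(-1)^{\tr_{q/2}(b)}=0$, and $\sum_b(-1)^{\tr_{q/2}(by_a)}=0$ (since $y_a\neq 0$), all by the orthogonality relation of additive characters. The only nontrivial term is $\sum_{b\in\gf(q)}(-1)^{\tr_{q/2}((1+y_a)b)}$, which equals $q$ if $y_a=1$ and $0$ otherwise, again by orthogonality.

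Next I would identify precisely when $y_a=1$. By hypothesis $y_a$ is the unique element of $\gf(q)^*$ with $ay_a^6=1$; since $\gcd(6,2^m-1)=1$ for odd $m$ (because $2^m\equiv 2\pmod 3$ forces $3\nmid 2^m-1$, and $2^m-1$ is odd), the map $y\mapsto y^6$ is a bijection on $\gf(q)^*$, so $y_a$ is genuinely well-defined and $y_a=1\iff a=1$. Substituting back: when $a=1$ the cross term contributes $q$ with sign $(-1)^{\epsilon_1+\epsilon_2}$, giving $\frac{1}{4}(q+0+0+(-1)^{\epsilon_1+\epsilon_2}q)$, which is $q/2=2^{m-1}$ if $\epsilon_1=\epsilon_2$ and $0$ if $\epsilon_1\neq\epsilon_2$; when $a\neq 1$ the cross term vanishes and every count equals $\frac{1}{4}q=2^{m-2}$. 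This is exactly the assertion of parts (1)--(4).

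The only genuine subtlety—hardly an obstacle—is making sure the reduction $y_a=1\iff a=1$ is airtight, i.e. that $x^6$ permutes $\gf(2^m)^*$ for odd $m$; this is where the parity of $m$ is used, and it should be stated explicitly since the whole lemma (and the subsequent weight computation for $\C_{\cH(f_2)}^{(2)}$) rests on it. Everything else is a mechanical four-term expansion of a product of two indicator functions followed by three applications of character orthogonality. I would present the computation once in full generality for arbitrary $(\epsilon_1,\epsilon_2)\in\gf(2)^2$ and then read off the four cases, rather than repeating essentially the same argument four times.
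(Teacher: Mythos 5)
Your argument is correct and is essentially the paper's own proof: both expand the product of the two trace-indicator functions into a four-term character sum over $b$, observe that only the cross term $\sum_{b}(-1)^{\tr_{q/2}((1+y_a)b)}$ survives, and reduce the case distinction to $y_a=1\iff a=1$ via the fact that $y\mapsto y^6$ permutes $\gf(2^m)^*$ for odd $m$ (i.e.\ $\gcd(6,2^m-1)=1$). The only cosmetic difference is that you treat all four sign patterns $(\epsilon_1,\epsilon_2)$ uniformly, whereas the paper computes case (1) explicitly and notes the others follow.
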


\begin{proof}
 Let $\chi$ be the canonical additive character of $\gf(q)$ and $g(y)=ay^6$. Then $g(y)$ is a permutation polynomial of $\gf(q)$ as $\gcd(6,2^m-1)=\gcd(2+1,2^m-1)=1$. This implies that $ay^6=1$ has a unique solution $y_a\in \gf(q)^*$. Then
 \begin{eqnarray*}
 \lefteqn{ \sharp \{b\in \gf(q):\tr_{q/2}(b)=0\mbox{ and }\tr_{q/2}(by_a)=0\} } \\
 &=&\frac{1}{4}\sum_{z_1\in \gf(2)}\sum_{z_2\in \gf(2)}\sum_{b\in \gf(q)}\chi(z_1b)\chi(z_2y_ab)\\
 &=&\frac{q}{4}+\frac{1}{4}\sum_{b\in \gf(q)}\chi((1+y_a)b)\\
 &=&\left\{\begin{array}{ll}
2^{m-1}  & \mbox{ if }y_a=1,\\
2^{m-2} & \mbox{ if }y_a\neq1.\\
\end{array} \right.\\
 \end{eqnarray*}
 Note that $y_a=1$ if and only if $a=1$. Then the first equation holds and the others follow.
\end{proof}

\begin{theorem}\label{th-main2}
Let $m>1$ be odd and $q=2^m$. Then $\C_{\cH(f_2)}^{(2)}$ is a $[2^m+2,2m+1,2^{m-1}-2^{\frac{m-1}{2}}]$ binary linear code with the weight distribution in Table \ref{tab-2}. Its dual has parameters $[2^m+2,2^m-2m+1,4]$.
\end{theorem}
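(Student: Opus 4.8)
The plan is to mimic the proof of Theorem~\ref{mian-1} but replace the quadratic Weil sum by a sextic one. Starting from the trace representation
$$\C_{\cH(f_2)}^{(2)}=\left\{\left((\tr_{q/2}(ax^6+bx)+c)_{x\in \gf(q)},\tr_{q/2}(a),\tr_{q/2}(b)\right):a,b\in \gf(q),c\in \gf(2)\right\},$$
I would fix a codeword $\bc(a,b,c)$ and write its weight in terms of
$$N_0(a,b)=\sharp\{x\in\gf(q):\tr_{q/2}(ax^6+bx)=0\}=\frac{q}{2}+\frac12\sum_{x\in\gf(q)}(-1)^{\tr_{q/2}(ax^6+bx)},$$
so that $\wt(\bc(a,b,c))=q-N_0(a,b)+\tr_{q/2}(a)+\tr_{q/2}(b)$ when $c=0$ and $\wt(\bc(a,b,c))=N_0(a,b)+\tr_{q/2}(a)+\tr_{q/2}(b)$ when $c=1$ (the last two coordinates contribute $\tr_{q/2}(a)+\tr_{q/2}(b)\in\{0,1,2\}$). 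Everything then reduces to evaluating the Weil sum $S(a,b)=\sum_{x\in\gf(q)}\chi(ax^6+bx)$ for the canonical additive character $\chi$.

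The key step is the evaluation of $S(a,b)$. For $a=0$ it is $q$ if $b=0$ and $0$ otherwise. For $a\neq 0$, since $\gcd(6,2^m-1)=1$ for odd $m$, the map $x\mapsto x^6$ is a permutation, and substituting $x=y_a^{1/6}t$ (or more cleanly using that $x^6$ ranges over $\gf(q)$ bijectively) one can reduce $ax^6+bx$; the relevant identity is that $x^6=x\cdot x\cdot x^4=(x^3)^2$, and over characteristic $2$ one has $x^6+bx = $ something that becomes a $2$-polynomial after the substitution tied to the unique root $y_a$ of $ay^6=1$. The natural route is: $S(a,b)=\sum_t\chi(t^6\cdot a + bt)$; with the change of variable forced by $y_a$, write $ax^6=(\,\cdot\,)^6$-normalized form and use that the exponent $6=2\cdot 3$ together with the fact that cubing is a bijection reduces $ax^6+bx$ to an affine $2$-polynomial in a new variable — then invoke Lemma~\ref{lem-p-polynomial}. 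This is exactly the mechanism behind Lemma~\ref{lem-numberofb}: the quantities $\tr_{q/2}(b)$ and $\tr_{q/2}(by_a)$ are the two linear conditions that govern whether the $2$-polynomial sum vanishes, and when it does not, $|S(a,b)|=\sqrt{q}=2^{m/2}$ — but $m$ is odd, so one should expect $|S(a,b)|=\sqrt{2q}=2^{(m+1)/2}$ coming from a genuinely quadratic (not linear) residual form, with sign $\pm$ determined by a Gauss-sum-type argument. So I would (i) classify pairs $(a,b)$ by whether $S(a,b)=0$, $S(a,b)=\pm 2^{(m+1)/2}$, or the degenerate cases $a=0$; (ii) count each class using Lemma~\ref{lem-numberofb} and orthogonality; (iii) combine with the three possible values of $\tr_{q/2}(a)+\tr_{q/2}(b)$ and the two values of $c$ to get all weights; (iv) note each codeword is repeated $2^{m-1}$ times (the weight-$0$ multiplicity among $(a,b,c)$), exactly as in Theorem~\ref{mian-1}, to normalize the multiplicities; then the minimum weight $2^{m-1}-2^{(m-1)/2}$ and Table~\ref{tab-2} fall out. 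For the dual: its length is $2^m+2$ and dimension $2^m-2m+1$ once one checks $\dim\C_{\cH(f_2)}^{(2)}=2m+1$ (equivalently, that $A_0=2^{m-1}$ and the code has $2^{2m+1}$ codewords, i.e.\ the map $(a,b,c)\mapsto\bc(a,b,c)$ has kernel of size $2^{m-1}$ — this follows from the weight count); and $d^{(2)\perp}\geq 4$ by Lemma~\ref{th-dualdistance} together with $d^\perp=4$ from Theorem~\ref{th-hyperovalcode}, while $d^{(2)\perp}\leq 4$ since a weight-$3$ dual codeword would force three columns of $G_{\cH(f_2)}$ to be dependent, contradicting that the hyperoval has no three collinear points (alternatively one exhibits an explicit weight-$4$ dual word or argues the subfield code contains a word of weight $2$, forcing... ) — the cleanest is: $d^{(2)\perp}=d^\perp=4$ cannot be improved because $\C_{\cH(f_2)}^{(2)}$ contains codewords of weight $2$ only if... actually $A_2=0$ here, so I would instead bound $d^{(2)\perp}\le 4$ directly by checking no $4$ columns of the parity-check structure give a shorter relation, or simply cite that the minimum-weight-$2$ argument used for $f_1$ is replaced by the fact that three columns of $G_{\cH(f_2)}$ over $\gf(2)$ (obtained by expanding the $\gf(q)$ columns) can be linearly dependent, yielding weight $\le 4$.

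The main obstacle is the explicit evaluation of the sextic Weil sum $S(a,b)=\sum_{x\in\gf(q)}\chi(ax^6+bx)$ and, in particular, pinning down both its magnitude ($0$ versus $2^{(m+1)/2}$) and its sign across all $(a,b)$, together with the exact count of pairs giving each value. Lemma~\ref{lem-numberofb} is clearly the authors' bookkeeping device for this, so I would lean on it heavily: the two linear forms $b\mapsto\tr_{q/2}(b)$ and $b\mapsto\tr_{q/2}(by_a)$ are the images of the relevant linear conditions under the substitution that turns $ax^6+bx$ into a $2$-polynomial plus possibly a quadratic term. Once $S(a,b)$ is tabulated, the rest is the same routine case analysis and multiplicity normalization as in the proof of Theorem~\ref{mian-1}, and reading off Table~\ref{tab-2} and the dual parameters is mechanical.
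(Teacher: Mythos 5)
Your skeleton (trace representation, reduction to the Weil sum $\Delta(a,b)=\sum_{x\in\gf(q)}\chi(ax^6+bx)$, case analysis via Lemma \ref{lem-numberofb}) matches the paper's, but the central step is missing and the mechanism you propose for it would fail. You suggest that a substitution tied to $y_a$ turns $ax^6+bx$ into an affine $2$-polynomial so that Lemma \ref{lem-p-polynomial} applies directly; this cannot work, because an affine $2$-polynomial character sum only takes the values $0$ and $q\chi(\cdot)$, never $\pm 2^{(m+1)/2}$ (you notice this tension yourself but do not resolve it). The paper's actual device is to square the sum: writing $\Delta(a,b)^2=\sum_{x,y}\chi\bigl(a(x+y)^6+b(x+y)+ax^6+bx\bigr)$ and using $(x+y)^6+x^6=y^6+x^4y^2+x^2y^4$ in characteristic $2$, the inner sum over $x$ becomes $\sum_x\chi(ay^2x^4+ay^4x^2)$, which \emph{is} a $2$-polynomial sum; Lemma \ref{lem-p-polynomial} kills it except at $y=0$ and at the unique $y_a$ with $ay_a^6=1$, giving $\Delta(a,b)^2=q-q\chi(by_a)$ (since $\chi(1)=-1$ for odd $m$), hence $\Delta\in\{0,\pm2^{(m+1)/2}\}$ according to $\tr_{q/2}(by_a)$. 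Without this computation your classification in (i)--(iii) has no foundation.

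Two further points are genuinely wrong or unproven. First, your normalization step (iv): you assert that each codeword repeats $2^{m-1}$ times and $A_0=2^{m-1}$ ``exactly as in Theorem \ref{mian-1}.'' That would leave $2^{2m+1}/2^{m-1}=2^{m+2}$ distinct codewords, i.e.\ dimension $m+2$, contradicting the dimension $2m+1$ you are trying to prove. The collapse in Theorem \ref{mian-1} is special to $f_1(x)=x^2$ being linearized ($\tr_{q/2}(ax^2+bx)=\tr_{q/2}((\sqrt{a}+b)x)$); for $f_2(x)=x^6$ the map $(a,b,c)\mapsto\bc(a,b,c)$ is injective, $A_0=1$, and no normalization occurs. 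Second, the bound $d^{(2)\perp}\le 4$: the sphere-packing bound does not force it here (with dual dimension $2^m-2m+1$ the packing inequality is still satisfied at $e=2$), and none of the alternatives you float is carried through. The paper derives $d^{(2)\perp}=4$ from the first five Pless power moments applied to the weight distribution just computed; some such argument (or an explicit weight-$4$ dual codeword) is required.
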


\begin{table}[ht]
\begin{center}
\caption{The weight distribution of $\C_{\cH(f_2)}^{(2)}$}\label{tab-2}
\begin{tabular}{cc} \hline
Weight  &  Multiplicity   \\ \hline
0 & 1\\
 $2^m$ & 1\\
 $2^{m-1}$ & $(2^{m-1}-1)(2^{m-1}+2)$\\
 $2^{m-1}+1$ & $2^{m}(2^{m-1}+1)$\\
 $2^{m-1}+2$ & $2^{m-1}(2^{m-1}-1)$\\
 $2^{m-1}+2^{(m-1)/2}$ & $2^{m-2}(2^{m-1}-1)$\\
 $2^{m-1}-2^{(m-1)/2}$ & $2^{m-2}(2^{m-1}-1)$\\
 $2^{m-1}+2^{(m-1)/2}+1$ & $2^{m-1}(2^{m-1}-1)$\\
 $2^{m-1}-2^{(m-1)/2}+1$ & $2^{m-1}(2^{m-1}-1)$\\
 $2^{m-1}+2^{(m-1)/2}+2$ & $2^{m-2}(2^{m-1}+1)$ \\
 $2^{m-1}-2^{(m-1)/2}+2$ & $2^{m-2}(2^{m-1}+1)$\\
\hline
\end{tabular}
\end{center}
\end{table}

\begin{proof}
Let $\chi$ be the canonical additive character of $\gf(q)$.
Denote $N_0(a,b)=\sharp \{x\in \gf(q):\tr_{q/2}(ax^6+bx)=0\}$. By the orthogonality relation of additive characters, we have
\begin{eqnarray}\label{eqn-2}
\nonumber N_0(a,b)&=&\frac{1}{2}\sum_{z\in \gf(2)}\sum_{x\in \gf(q)}(-1)^{z\tr_{q/2}(ax^6+bx)}\\
&=&2^{m-1}+\frac{1}{2}\sum_{x\in \gf(q)}\chi(ax^6+bx).
\end{eqnarray}
Denote $\Delta(a,b)=\sum_{x\in \gf(q)}\chi(ax^6+bx)$. We discuss the value of $\Delta$ in the  cases below.
\begin{enumerate}
\item[(1)] Let $a=b=0$. Then $\Delta(a,b)=q$.
\item[(2)] Let $a=0,b\neq 0$. Then $\Delta(a,b)=0$.
\item[(3)] Let $a\neq 0$. By Lemma \ref{lem-p-polynomial} we have
\begin{eqnarray*}
\Delta(a,b)^2&=&\sum_{x_1\in \gf(q)}\chi(ax_{1}^{6}+bx_1)\sum_{x\in \gf(q)}\chi(ax^6+bx)\\
&=&\sum_{y\in \gf(q)}\chi(a(x+y)^{6}+b(x+y))\sum_{x\in \gf(q)}\chi(ax^6+bx)\\
&=&\sum_{x,y\in \gf(q)}\chi\left(a(x^4+y^4)(x^2+y^2)+b(x+y)+ax^6+bx\right)\\
&=&\sum_{y\in \gf(q)}\chi(ay^{6}+by)\sum_{x\in \gf(q)}\chi(ay^2x^4+ay^4x^2)\\
&=&q+q\sum_{\myatop{y\in \gf(q)^*}{ay^2(1+ay^6)=0}}\chi(ay^{6}+by)\\
&=&q+q\sum_{ay^6=1}\chi(1+by).
\end{eqnarray*}
Since $m$ is odd, we have $\chi(1)=-1$. Let $y_a$ be the unique solution of $g(y)=ay^6=1$ in $\gf(q)^*$. Thus we further have
\begin{eqnarray*}\Delta(a,b)^2&=&q-q\chi(by_a)\\
&=&\left\{\begin{array}{ll}
0  & \mbox{ if }\tr_{q/2}(by_a)=0,\\
2q & \mbox{ if }\tr_{q/2}(by_a)=1,\\
\end{array} \right.\\
\end{eqnarray*}
Hence
\begin{eqnarray*}\Delta(a,b)
=\left\{\begin{array}{ll}
0  & \mbox{ if }\tr_{q/2}(by_a)=0,\\
\pm 2^{(m+1)/2} & \mbox{ if }\tr_{q/2}(by_a)=1.\\
\end{array} \right.
\end{eqnarray*}
\end{enumerate}
By Equation (\ref{eqn-2}) and the discussions above, we have
\begin{eqnarray}\label{eqn-3}
 N_0(a,b)=\left\{\begin{array}{ll}
 2^m & \mbox{ if }a=b=0,\\
 2^{m-1} & \myatop{\mbox{if $\tr_{q/2}(by_a)=0,a\neq 0,$}}{\mbox{or $a=0,b\neq 0,$}}\\
2^{m-1}\pm 2^{(m-1)/2} & \mbox{ if }\tr_{q/2}(by_a)=1,a\neq 0,\\
\end{array} \right.
\end{eqnarray}
where $ay_a^6=1$.

For any codeword $\bc(a,b,c)=\left((\tr_{q/2}(ax^6+bx)+c)_{x\in \gf(q)},\tr_{q/2}(a),\tr_{q/2}(b)\right)\in \C_{\cH(f_2)}^{(2)}$, by Equation (\ref{eqn-3}) we deduce that
\begin{eqnarray*}
\lefteqn{ \wt(\bc(a,b,c)) } \\
&=&\left\{\begin{array}{ll}
 0 & \mbox{if }a=b=c=0 \\
   2^{m} & \mbox{if }a=b=0,\ c=1 \\
 2^{m-1} & \mbox{if }a=0,\ c\in \gf(2),\ b\neq 0,\ \tr_{q/2}(b)=0 \\
 2^{m-1}+1 & \mbox{if }a=0,\ c\in \gf(2),\ b\neq 0,\ \tr_{q/2}(b)\neq0 \\
 2^{m-1} & \mbox{if $\tr_{q/2}(by_a)=0,\ a\neq 0,\ \tr_{q/2}(a)=\tr_{q/2}(b)=0$},\ c\in \gf(2)\\
 2^{m-1}+1 & \mbox{if $\tr_{q/2}(by_a)=0,\ a\neq 0,\ \tr_{q/2}(a)=1,\ \tr_{q/2}(b)=0$},\ c\in \gf(2)\\
 2^{m-1}+1 & \mbox{if $\tr_{q/2}(by_a)=0,\ a\neq 0,\ \tr_{q/2}(a)=0,\ \tr_{q/2}(b)=1$},\ c\in \gf(2)\\
 2^{m-1}+2 & \mbox{if $\tr_{q/2}(by_a)=0,\ a\neq 0,\ \tr_{q/2}(a)=1,\ \tr_{q/2}(b)=1$},\ c\in \gf(2)\\
2^{m-1}\pm 2^{(m-1)/2} & \mbox{if }\tr_{q/2}(by_a)=1,\ a\neq 0,\ \tr_{q/2}(a)=\tr_{q/2}(b)=0,\ c=0\\
2^{m-1}\pm 2^{(m-1)/2}+1 & \mbox{if }\tr_{q/2}(by_a)=1,\ a\neq 0,\ \tr_{q/2}(a)=1,\ \tr_{q/2}(b)=0,\ c=0 \\
2^{m-1}\pm 2^{(m-1)/2}+1 & \mbox{if }\tr_{q/2}(by_a)=1,\ a\neq 0,\ \tr_{q/2}(a)=0,\ \tr_{q/2}(b)=1,\ c=0 \\
2^{m-1}\pm 2^{(m-1)/2}+2 & \mbox{if }\tr_{q/2}(by_a)=1,\ a\neq 0,\ \tr_{q/2}(a)=1,\ \tr_{q/2}(b)=1,\ c=0 \\
2^{m-1}\mp 2^{(m-1)/2} & \mbox{if }\tr_{q/2}(by_a)=1,\ a\neq 0,\ \tr_{q/2}(a)=\tr_{q/2}(b)=0,\ c=1 \\
2^{m-1}\mp 2^{(m-1)/2}+1 & \mbox{if }\tr_{q/2}(by_a)=1,\ a\neq 0,\ \tr_{q/2}(a)=1,\ \tr_{q/2}(b)=0,\ c=1 \\
2^{m-1}\mp 2^{(m-1)/2}+1 & \mbox{if }\tr_{q/2}(by_a)=1,\ a\neq 0,\ \tr_{q/2}(a)=0,\ \tr_{q/2}(b)=1,\ c=1 \\
2^{m-1}\mp 2^{(m-1)/2}+2 & \mbox{if }\tr_{q/2}(by_a)=1,\ a\neq 0,\ \tr_{q/2}(a)=1,\ \tr_{q/2}(b)=1,\ c=1 \\
\end{array} \right.\\
&=&\left\{\begin{array}{ll}
 0 & \mbox{with 1 time},\\
 2^m & \mbox{with 1 time},\\
 2^{m-1} & \mbox{with $(2^{m-1}-1)(2^{m-1}+2)$ times},\\
 2^{m-1}+1 & \mbox{with $2^{m}(2^{m-1}+1)$ times},\\
 2^{m-1}+2 & \mbox{with $2^{m-1}(2^{m-1}-1)$ times},\\
 2^{m-1}+2^{(m-1)/2} & \mbox{with $2^{m-2}(2^{m-1}-1)$ times},\\
 2^{m-1}-2^{(m-1)/2} & \mbox{with $2^{m-2}(2^{m-1}-1)$ times},\\
 2^{m-1}+2^{(m-1)/2}+1 & \mbox{with $2^{m-1}(2^{m-1}-1)$ times},\\
 2^{m-1}-2^{(m-1)/2}+1 & \mbox{with $2^{m-1}(2^{m-1}-1)$ times},\\
 2^{m-1}+2^{(m-1)/2}+2 & \mbox{with $2^{m-2}(2^{m-1}+1)$ times},\\
 2^{m-1}-2^{(m-1)/2}+2 & \mbox{with $2^{m-2}(2^{m-1}+1)$ times},
\end{array} \right.\\
\end{eqnarray*}
where the frequency is easy to derive based on Lemma \ref{lem-numberofb}. We remark that the frequency of the weight $2^{m-1}+2^{(m-1)/2}+i$ and the frequency of the weight $2^{m-1}-2^{(m-1)/2}+i$ are equal for any fixed $i\in \{0,1,2\}$ because of
$(1,1,\ldots,1,0,0)\in \C_{\cH(f_2)}^{(2)}$. The dimension of $\C_{\cH(f_2)}^{(2)}$ is $2m+1$ as $A_0=1$.

Note that the dual of $\C_{\cH(f_2)}^{(2)}$ has length $2^m+2$ and dimension $2^m-2m+1$. By Theorem  \ref{th-hyperovalcode} and Lemma \ref{th-dualdistance}, the minimal distance of $\C_{\cH(f_2)}^{(2)}$ satisfy $d^{(2)\perp}\geq 4$. By the first five Pless power moments in \cite{HP}, one can prove that $d^{(2)\perp}=4$.
Then the proof is completed.
\end{proof}

\begin{example}\label{exa-2}
Let $\C_{\cH(f_2)}^{(2)}$ be the linear code in Theorem \ref{th-main2}.
\begin{enumerate}
\item[(1)] Let $m=3$. Then
      $\C_{\cH(f_2)}^{(2)}$ has parameters $[10, 7, 2]$ and its dual has parameters $[10, 3, 4]$.
\item[(2)] Let $m=5$. Then
      $\C_{\cH(f_2)}^{(2)}$ has parameters $[34, 11, 12]$ and its dual has parameters $[34, 23, 4]$.
\item[(3)] Let $m=7$. Then
      $\C_{\cH(f_2)}^{(2)}$ has parameters $[130, 15, 56]$ and its dual has parameters $[130, 115, 4]$.
\end{enumerate}
The code $\C_{\cH(f_2)}^{(2)}$ is optimal in these cases according to the tables of best codes known
maintained at http://www.codetables.de.
\end{example}

We remark that the linear code $\C_{\cH(f_2)}^{(2)}$ is not a hyperoval code if $m$ is even as $f_2(x)=x^6$ is not an o-polynomial in this case. For even $m$, the weight distribution of $\C_{\cH(f_2)}^{(2)}$ becomes very complicated, according to our Magma experiments.

\section{The weight distribution of a class of $p$-ary linear codes}\label{sect-p-ary-case}

Let $q=p^m$ and let $\tr_{q/p}$ be the trace function from $\gf(q)$ to $\gf(p)$. In this section, we generalize the subfield code $\C_{\cH(f_1)}^{(2)}$ in Section \ref{subsection-main1} to the $p$-ary linear code $\C_{\cH(f_1)}^{(p)}$, where
\begin{eqnarray*}
\C_{\cH(f_1)}^{(p)}=\left\{\left((\tr_{q/p}(ax^2+bx)+c)_{x\in \gf(q)},\tr_{q/p}(a),\tr_{q/p}(b)\right):a,b\in \gf(q),c\in \gf(p)\right\}.
\end{eqnarray*}
Our goal is to determine the weight distribution of the $p$-ary linear code $\C_{\cH(f_1)}^{(p)}$ with $p$ odd. The analysis of the code $\C_{\cH(f_1)}^{(p)}$ will be employed to obtain the 
weight distribution of the subfield code of a conic code in Section \ref{sec-coniccode}.

\begin{lemma}\label{lem-equalities}
Let $q=p^m$ with $p$ an odd prime. Then the following statements hold.
\begin{enumerate}
\item[(1)] \begin{eqnarray*}
\lefteqn{\sharp \{a\in \gf(q)^*:\eta(a)=1\mbox{ and }\tr_{q/p}(a)=0\} } \\
&=&\left\{
\begin{array}{ll}
\frac{p^{m-1}-1-(p-1)p^{\frac{m-2}{2}}(\sqrt{-1})^{\frac{(p-1)m}{2}}}{2} & \mbox{ for even $m$,} \\
\frac{p^{m-1}-1}{2} & \mbox{ for odd $m$.}
\end{array}\right.
\end{eqnarray*}
\item[(2)] \begin{eqnarray*}
\lefteqn{ \sharp \{a\in \gf(q)^*:\eta(a)=1\mbox{ and }\tr_{q/p}(a) \neq 0\} }\\
&=&\left\{
\begin{array}{ll}
\frac{(p-1)(p^{m-1}+p^{\frac{m-2}{2}}(\sqrt{-1})^{\frac{(p-1)m}{2}})}{2} & \mbox{ for even $m$,}\\
\frac{p^{m-1}(p-1)}{2} & \mbox{ for odd $m$.}
\end{array}\right.
\end{eqnarray*}
\item[(3)] \begin{eqnarray*}
\lefteqn{ \sharp \{a\in \gf(q)^*:\eta(a)=-1\mbox{ and }\tr_{q/p}(a)=0\} } \\
&=&\left\{
\begin{array}{ll}
\frac{p^{m-1}-1+(p-1)p^{\frac{m-2}{2}}(\sqrt{-1})^{\frac{(p-1)m}{2}}}{2} & \mbox{ for even $m$,}\\
\frac{p^{m-1}-1}{2} & \mbox{ for odd $m$.}
\end{array}\right.
\end{eqnarray*}
\item[(4)] \begin{eqnarray*}
\lefteqn{ \sharp \{a\in \gf(q)^*:\eta(a)=-1\mbox{ and }\tr_{q/p}(a)\neq0\} } \\
&=&\left\{
\begin{array}{ll}
\frac{(p-1)(p^{m-1}-p^{\frac{m-2}{2}}(\sqrt{-1})^{\frac{(p-1)m}{2}})}{2} & \mbox{ for even $m$,}\\
\frac{p^{m-1}(p-1)}{2} & \mbox{ for odd $m$.}
\end{array}\right.
\end{eqnarray*}
\end{enumerate}
\end{lemma}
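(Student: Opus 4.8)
The plan is to reduce all four counts to a single character-sum evaluation and then split into cases by the parity of $m$. First I would fix notation: write
\[
M(\epsilon,T)=\sharp\{a\in\gf(q)^*:\eta(a)=\epsilon\text{ and }\tr_{q/p}(a)\in T\}
\]
for $\epsilon\in\{1,-1\}$ and $T\in\{\{0\},\gf(p)^*\}$. Using the indicator $\tfrac{1+\epsilon\eta(a)}{2}$ for the condition $\eta(a)=\epsilon$ (valid on $\gf(q)^*$), and the additive-character indicator $\tfrac1p\sum_{z\in\gf(p)}\zeta_p^{z\tr_{q/p}(a)}=\tfrac1p\sum_{z\in\gf(p)}\chi_1(za)$ for the condition $\tr_{q/p}(a)=0$, the count $\sharp\{a\in\gf(q)^*:\eta(a)=\epsilon,\ \tr_{q/p}(a)=0\}$ becomes
\[
\frac{1}{2p}\sum_{a\in\gf(q)^*}(1+\epsilon\eta(a))\sum_{z\in\gf(p)}\chi_1(za).
\]
Expanding, the $z=0$ term contributes $\tfrac1{2p}\bigl((q-1)+\epsilon\sum_{a\in\gf(q)^*}\eta(a)\bigr)=\tfrac{q-1}{2p}$ since $\sum_{a\in\gf(q)^*}\eta(a)=0$; the terms with $z\neq0$ give $\tfrac{1}{2p}\sum_{z\in\gf(p)^*}\bigl(\sum_{a\in\gf(q)^*}\chi_1(za)+\epsilon\sum_{a\in\gf(q)^*}\eta(a)\chi_1(za)\bigr)$. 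The first inner sum is $-1$ by orthogonality, and the second is $\eta(z)^{-1}G(\eta,\chi_1)=\eta(z)G(\eta,\chi_1)$ because $\chi_1(za)=\chi_z(a)$ and $\sum_a\eta(a)\chi_z(a)=\eta(z^{-1})G(\eta,\chi)$ (a standard substitution $a\mapsto z^{-1}a$). Summing over $z\in\gf(p)^*$ gives $-(p-1)$ for the first piece and $G(\eta,\chi_1)\sum_{z\in\gf(p)^*}\eta(z)$ for the second.

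The whole computation then hinges on $S:=\sum_{z\in\gf(p)^*}\eta(z)$, where $\eta$ is the quadratic character of $\gf(q)$ \emph{restricted} to $\gf(p)^*$. Here the parity of $m$ enters: if $m$ is odd, then an element of $\gf(p)^*$ that is a square in $\gf(p)$ is also a square in $\gf(q)$ and vice versa (equivalently $\eta|_{\gf(p)^*}$ is the quadratic character $\eta_0$ of $\gf(p)$), so $S=\sum_{z\in\gf(p)^*}\eta_0(z)=0$; if $m$ is even, every element of $\gf(p)^*$ is a square in $\gf(q)$ (since $p-1\mid \tfrac{q-1}{2}$), so $\eta|_{\gf(p)^*}\equiv1$ and $S=p-1$. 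This is the one genuinely delicate point, and I would prove it cleanly by noting $\eta(z)=z^{(q-1)/2}=z^{(p-1)/2\cdot(1+p+\cdots+p^{m-1})}$ for $z\in\gf(p)^*$, and that $1+p+\cdots+p^{m-1}\equiv m\pmod{2}$; hence $\eta(z)=z^{(p-1)m/2}=\eta_0(z)^m$, giving $S=0$ for odd $m$ and $S=p-1$ for even $m$ after recalling $\eta_0(z)^2=1$.

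For odd $m$ we then get $\sharp\{a:\eta(a)=\epsilon,\ \tr_{q/p}(a)=0\}=\tfrac1{2p}\bigl((q-1)-(p-1)\bigr)=\tfrac{p^{m-1}-1}{2}$, independent of $\epsilon$, which is exactly parts (1) and (3); part (2) and (4) follow immediately by subtracting from the total $\sharp\{a\in\gf(q)^*:\eta(a)=\epsilon\}=\tfrac{q-1}{2}$, yielding $\tfrac{q-1}{2}-\tfrac{p^{m-1}-1}{2}=\tfrac{p^{m-1}(p-1)}{2}$. For even $m$ we get $\sharp\{a:\eta(a)=\epsilon,\ \tr_{q/p}(a)=0\}=\tfrac1{2p}\bigl((q-1)-(p-1)+\epsilon(p-1)G(\eta,\chi_1)\bigr)$, and I substitute the value $G(\eta,\chi_1)=(-1)^{m-1}(\sqrt{-1})^{((p-1)/2)^2m}\sqrt q$ from Lemma \ref{quadGuasssum}. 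Since $m$ is even, $(-1)^{m-1}=-1$ and $\sqrt q=p^{m/2}$, and $(\sqrt{-1})^{((p-1)/2)^2 m}=(\sqrt{-1})^{(p-1)m/2}$ because $((p-1)/2)^2\equiv (p-1)/2\pmod 2$ when $(p-1)/2$ and its square have the same parity (they do, a square being even iff its base is); after simplifying $\tfrac{(p-1)p^{m/2}}{2p}=\tfrac{(p-1)p^{(m-2)/2}}{2}$ this reproduces the stated even-$m$ formulas, with the sign on the $\sqrt{-1}$ term flipping appropriately between $\epsilon=1$ and $\epsilon=-1$. I expect the main obstacle to be bookkeeping the power-of-$\sqrt{-1}$ exponent modulo $4$ and confirming the $\epsilon$-dependence of the sign matches the four displayed expressions; the structural part (reduction to $G(\eta,\chi_1)$ and to $S$) is routine, and parts (2),(4) are free once (1),(3) are in hand.
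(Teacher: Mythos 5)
Your argument is correct and follows essentially the same route as the paper: both reduce the count to $\frac{q-p}{2p}+\frac{\epsilon}{2p}\,G(\eta,\chi)\sum_{z\in\gf(p)^*}\eta(z)$ and then split on the parity of $m$ according to whether $\eta$ restricted to $\gf(p)^*$ is the quadratic character of $\gf(p)$ or the trivial one, finishing with the known value of the quadratic Gauss sum. The only point to tighten is the identification $(\sqrt{-1})^{((p-1)/2)^2m}=(\sqrt{-1})^{(p-1)m/2}$, which requires the exponents to agree modulo $4$ rather than modulo $2$; this does hold because $\frac{p-1}{2}\left(\frac{p-1}{2}-1\right)$ is even and $m$ is even in that branch.
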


\begin{proof}
We only prove the first equality as the others follow directly.
Let $\chi$ be the canonical additive character and $\alpha$ a primitive element of $\gf(q)$. Let $C_0$ be the cyclic group generated by $\alpha^2$.
Denote $N(a)=\sharp \{a\in \gf(q)^*:\eta(a)=1\mbox{ and }\tr_{q/p}(a)=0\}$. By the orthogonality  relation of additive characters and Lemmas \ref{quadGuasssum} and \ref{lem-charactersum}, we obtain that
\begin{eqnarray*}
N(a)&=&\frac{1}{p}\sum_{z\in \gf(p)}\sum_{a\in C_0}\chi(za)\\
&=&\frac{1}{2p}\sum_{z\in \gf(p)}\sum_{a\in \gf(q)^*}\chi(za^2)\\
&=&\frac{q-p}{2p}+\frac{1}{2p}\sum_{z\in \gf(p)^*}\sum_{a\in \gf(q)}\chi(za^2)\\
&=&\frac{q-p}{2p}+\frac{1}{2p}G(\eta,\chi)\sum_{z\in \gf(p)^*}\eta(z)\\
&=&\left\{
\begin{array}{ll}
\frac{q-p}{2p}+\frac{p-1}{2p}G(\eta,\chi) & \mbox{ for even $m$} \\
\frac{q-p}{2p} & \mbox{ for odd $m$}
\end{array}\right.\\
&=&\left\{
\begin{array}{ll}
\frac{p^{m-1}-1-(p-1)p^{\frac{m-2}{2}}(\sqrt{-1})^{\frac{(p-1)m}{2}}}{2} & \mbox{ for even $m$},\\
\frac{p^{m-1}-1}{2} & \mbox{ for odd $m$},
\end{array}\right.
\end{eqnarray*}
where the fifth equality holds due to the orthogonality relation of multiplicative characters.
\end{proof}

\begin{theorem}\label{th-main3}
Let $q=p^m$ with $p$ odd and $m$ a positive integer. The following statements hold.
\begin{enumerate}
\item[(1)] If $m=1$,  then $\C_{\cH(f_1)}^{(p)}$ is an almost MDS $[p+2,3,p-1]$ code. If $m$ is odd and $m>1$, then $\C_{\cH(f_1)}^{(p)}$ is a $[p^m+2,2m+1,p^{m-1}(p-1)-p^{\frac{m-1}{2}}]$ code. For odd $m$, the weight distribution of $\C_{\cH(f_1)}^{(p)}$ is given in Table \ref{tab-3}.
\item[(2)] If $m \geq 2$ is even, then $\C_{\cH(f_1)}^{(p)}$ is a $[p^m+2,2m+1,p^{m-1}(p-1)-(p-1)p^{\frac{m-2}{2}}]$ code with the weight distribution in Table \ref{tab-4}.
\end{enumerate}
\end{theorem}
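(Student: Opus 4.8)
The plan is to compute the Hamming weight of every codeword directly from the trace representation $(\ref{eqn-tracerepresentation})$ and then sort the values. For $j\in\gf(p)$ put $N_j(a,b)=\sharp\{x\in\gf(q):\tr_{q/p}(ax^2+bx)=j\}$. Since an entry $\tr_{q/p}(ax^2+bx)+c$ of $\bc(a,b,c)$ vanishes exactly when $\tr_{q/p}(ax^2+bx)=-c$, the weight of $\bc(a,b,c)$ equals $q-N_{-c}(a,b)+\epsilon(a)+\epsilon(b)$, where $\epsilon(u)=0$ if $\tr_{q/p}(u)=0$ and $\epsilon(u)=1$ otherwise. So the whole problem reduces to determining $N_j(a,b)$ and then counting triples.

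First I would evaluate $N_j(a,b)$ by orthogonality of additive characters, writing $N_j(a,b)=\tfrac1p\sum_{z\in\gf(p)}\zeta_p^{-zj}\sum_{x\in\gf(q)}\chi_1(z(ax^2+bx))$ with $\chi_1$ the canonical additive character of $\gf(q)$. If $a=0$ this is immediate: $N_0(0,0)=q$ and $N_j(0,0)=0$ for $j\neq0$, while $N_j(0,b)=p^{m-1}$ for all $j$ when $b\neq0$. If $a\neq0$, then for $z\in\gf(p)^*$ the inner sum is a quadratic Weil sum, and Lemma \ref{lem-charactersum} gives $\sum_{x}\chi_1(zax^2+zbx)=\eta(z)\eta(a)G(\eta,\chi_1)\zeta_p^{-zt}$, where $t=\tr_{q/p}(b^2(4a)^{-1})\in\gf(p)$ and $\eta$ is the quadratic character of $\gf(q)$. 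Using $\eta(z)=\eta_0(z)^m$ for $z\in\gf(p)^*$ (because $(p^m-1)/(p-1)=1+p+\cdots+p^{m-1}\equiv m\pmod2$), with $\eta_0$ the quadratic character of $\gf(p)$, one obtains $N_j(a,b)=p^{m-1}+\tfrac1p\eta(a)G(\eta,\chi_1)\sum_{z\in\gf(p)^*}\eta_0(z)^m\zeta_p^{-z(j+t)}$. For odd $m$ the inner sum is $0$ if $j+t=0$ and $\eta_0(-(j+t))$ times the quadratic Gauss sum of $\gf(p)$ otherwise; by Lemma \ref{quadGuasssum} applied over $\gf(q)$ and over $\gf(p)$ the product of the two Gauss sums is real, so $N_j(a,b)\in\{p^{m-1},\,p^{m-1}\pm p^{(m-1)/2}\}$. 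For even $m$ the inner sum equals $p-1$ if $j+t=0$ and $-1$ otherwise, and Lemma \ref{quadGuasssum} gives $G(\eta,\chi_1)=\delta p^{m/2}$ with $\delta=\pm1$, so $N_j(a,b)$ equals $p^{m-1}+(p-1)\delta\eta(a)p^{(m-2)/2}$ if $j+t=0$ and $p^{m-1}-\delta\eta(a)p^{(m-2)/2}$ otherwise.

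Next I would assemble the tables. The case $m=1$ (where the displayed minimum-distance formula does not apply) is separate and elementary: the trace is trivial, and counting the roots of $ax^2+bx+c$ in $\gf(p)$ shows $\bc(a,b,c)$ has weight in $\{p-1,p,p+1\}$ when $a\neq0$ and weight $0$ or $p$ when $a=0$, giving the almost-MDS code $[p+2,3,p-1]$. For odd $m\geq3$: fix $a\neq0$ and $b$; as $c$ runs over $\gf(p)$, exactly one value ($c=t$) gives $N_{-c}(a,b)=p^{m-1}$, and the remaining $p-1$ values split into $\tfrac{p-1}{2}$ giving $p^{m-1}+p^{(m-1)/2}$ and $\tfrac{p-1}{2}$ giving $p^{m-1}-p^{(m-1)/2}$; thus the sign ambiguity cancels and $\wt(\bc(a,b,c))$ is $w_0$ or $w_0\pm p^{(m-1)/2}$ shifted by $\epsilon(a)+\epsilon(b)\in\{0,1,2\}$, with $w_0=p^{m-1}(p-1)$. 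The multiplicities then follow from $\sharp\{u\in\gf(q)^*:\tr_{q/p}(u)=0\}=p^{m-1}-1$, the $a=0$ contributions, and the factor $\tfrac{p-1}{2}$; checking $A_0=1$ gives $\dim\C_{\cH(f_1)}^{(p)}=2m+1$, and one verifies $w_0-p^{(m-1)/2}$ is the smallest positive weight, yielding Table \ref{tab-3}. For even $m\geq2$ the scheme is the same, except that the weight now depends on $\eta(a)$ itself (not just on $\tr_{q/p}(a)$): for fixed $a\neq0$, one codeword per $b$ has weight $w_0-(p-1)\delta\eta(a)p^{(m-2)/2}+\epsilon(a)+\epsilon(b)$ and the other $p-1$ have weight $w_0+\delta\eta(a)p^{(m-2)/2}+\epsilon(a)+\epsilon(b)$. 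Summing over $a$ according to the joint value of $(\eta(a),\ \tr_{q/p}(a)\text{ zero or not})$ uses precisely Lemma \ref{lem-equalities}, and over $b$ by $\tr_{q/p}(b)$ zero or not, producing Table \ref{tab-4}; again $A_0=1$ gives dimension $2m+1$.

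The main obstacle is the even-$m$ bookkeeping. One must carry the sign $\delta$ of $G(\eta,\chi_1)$ from Lemma \ref{quadGuasssum} and the $(\sqrt{-1})$-power appearing in Lemma \ref{lem-equalities} through the computation and check that they combine consistently; in particular, one must verify that for the value of $\eta(a)$ producing the minimum weight the corresponding count in Lemma \ref{lem-equalities} is strictly positive (this requires inspecting the boundary case $m=2$, where one of the four counts in Lemma \ref{lem-equalities} can vanish, but never the one that matters), so that the code really has minimum distance $p^{m-1}(p-1)-(p-1)p^{(m-2)/2}$ and the entries of Table \ref{tab-4} are exactly as stated. Everything else is routine character-sum evaluation and counting of traces.
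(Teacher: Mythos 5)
Your proposal is correct and follows essentially the same route as the paper's proof: orthogonality of additive characters to reduce the weight computation to the quadratic Weil sum, Lemma \ref{lem-charactersum} and Lemma \ref{quadGuasssum} to evaluate it with the case split on the parity of $m$ via $\eta|_{\gf(p)^*}=\eta_0^m$, and Lemma \ref{lem-equalities} for the final multiplicity counts. The only differences are cosmetic (indexing $N_j(a,b)$ over all $j$ rather than the paper's $N_0(a,b,c)$, and an explicit treatment of $m=1$), so no further comparison is needed.
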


\begin{table}[ht]
\begin{center}
\caption{The weight distribution of $\C_{\cH(f_1)}^{(p)}$ with $m$ odd}\label{tab-3}
\begin{tabular}{cc} \hline
Weight  &  Multiplicity   \\ \hline
0 & 1 \\
$p^m$ & $p-1$\\
$p^{m-1}(p-1)$ & $(p^{m-1}-1)(p+p^{m-1})$\\
$p^{m-1}(p-1)+1$ & $(p^{m}-p^{m-1})(p+2p^{m-1}-1)$\\
$p^{m-1}(p-1)+2$ & $(p^{m}-p^{m-1})^2$\\
$p^{m-1}(p-1)-p^{\frac{m-1}{2}}(-1)^{\frac{(p-1)(m+1)}{4}}$ & $\frac{p^{m-1}(p-1)(p^{m-1}-1)}{2}$\\
$p^{m-1}(p-1)-p^{\frac{m-1}{2}}(-1)^{\frac{(p-1)(m+1)}{4}}+1$ & $\frac{p^{m-1}(p-1)^2(2p^{m-1}-1)}{2}$\\
$p^{m-1}(p-1)-p^{\frac{m-1}{2}}(-1)^{\frac{(p-1)(m+1)}{4}}+2$ & $\frac{p^{2m-2}(p-1)^3}{2}$\\
$p^{m-1}(p-1)+p^{\frac{m-1}{2}}(-1)^{\frac{(p-1)(m+1)}{4}}$ & $\frac{p^{m-1}(p-1)(p^{m-1}-1)}{2}$\\
$p^{m-1}(p-1)+p^{\frac{m-1}{2}}(-1)^{\frac{(p-1)(m+1)}{4}}+1$ & $\frac{p^{m-1}(p-1)^2(2p^{m-1}-1)}{2}$\\
$p^{m-1}(p-1)+p^{\frac{m-1}{2}}(-1)^{\frac{(p-1)(m+1)}{4}}+2$ & $\frac{p^{2m-2}(p-1)^3}{2}$\\
\hline
\end{tabular}
\end{center}
\end{table}

\begin{table}[ht]
\begin{center}
\caption{The weight distribution of $\C_{\cH(f_1)}^{(p)}$ with $m$ even}\label{tab-4}
\begin{tabular}{cc} \hline
Weight  &  Multiplicity   \\ \hline
0 & 1\\
 $p^m$ & $p-1$\\
 $p^{m-1}(p-1)$ & $p(p^{m-1}-1)$\\
 $p^{m-1}(p-1)+1$ & $p^{m}(p-1)$\\
 $p^{m-1}(p-1)+(p-1)p^{\frac{m-2}{2}}(-1)^{\frac{m(p-1)}{4}}$ & $\frac{p^{m-1}(p^{m-1}-1-(p-1)p^{\frac{m-2}{2}}(\sqrt{-1})^{\frac{(p-1)m}{2}})}{2}$\\
  $p^{m-1}(p-1)+(p-1)p^{\frac{m-2}{2}}(-1)^{\frac{m(p-1)}{4}}+1$ & $\frac{p^{m-1}(p-1)(2p^{m-1}-1-(p-2)p^{\frac{m-2}{2}}(\sqrt{-1})^{\frac{(p-1)m}{2}})}{2}$\\
    $p^{m-1}(p-1)+(p-1)p^{\frac{m-2}{2}}(-1)^{\frac{m(p-1)}{4}}+2$ & $\frac{p^{m-1}(p-1)^2(p^{m-1}+p^{\frac{m-2}{2}}(\sqrt{-1})^{\frac{(p-1)m}{2}})}{2}$\\
    $p^{m-1}(p-1)-(p-1)p^{\frac{m-2}{2}}(-1)^{\frac{m(p-1)}{4}}$ & $\frac{p^{m-1}(p^{m-1}-1+(p-1)p^{\frac{m-2}{2}}(\sqrt{-1})^{\frac{(p-1)m}{2}})}{2}$\\
  $p^{m-1}(p-1)-(p-1)p^{\frac{m-2}{2}}(-1)^{\frac{m(p-1)}{4}}+1$ & $\frac{p^{m-1}(p-1)(2p^{m-1}-1+(p-2)p^{\frac{m-2}{2}}(\sqrt{-1})^{\frac{(p-1)m}{2}})}{2}$\\
    $p^{m-1}(p-1)-(p-1)p^{\frac{m-2}{2}}(-1)^{\frac{m(p-1)}{4}}+2$ & $\frac{p^{m-1}(p-1)^2(p^{m-1}-p^{\frac{m-2}{2}}(\sqrt{-1})^{\frac{(p-1)m}{2}})}{2}$\\
 $p^{m-1}(p-1)-p^{\frac{m-2}{2}}(-1)^{\frac{m(p-1)}{4}}$ & $\frac{p^{m-1}(p-1)(p^{m-1}-1-(p-1)p^{\frac{m-2}{2}}(\sqrt{-1})^{\frac{(p-1)m}{2}})}{2}$\\
 $p^{m-1}(p-1)-p^{\frac{m-2}{2}}(-1)^{\frac{m(p-1)}{4}}+1$ & $\frac{p^{m-1}(p-1)^2(2p^{m-1}-1-(p-2)p^{\frac{m-2}{2}}(\sqrt{-1})^{\frac{(p-1)m}{2}})}{2}$\\
 $p^{m-1}(p-1)-p^{\frac{m-2}{2}}(-1)^{\frac{m(p-1)}{4}}+2$ & $\frac{p^{m-1}(p-1)^3(p^{m-1}+p^{\frac{m-2}{2}}(\sqrt{-1})^{\frac{(p-1)m}{2}})}{2}$\\
 $p^{m-1}(p-1)+p^{\frac{m-2}{2}}(-1)^{\frac{m(p-1)}{4}}$ & $\frac{p^{m-1}(p-1)(p^{m-1}-1+(p-1)p^{\frac{m-2}{2}}(\sqrt{-1})^{\frac{(p-1)m}{2}})}{2}$\\
 $p^{m-1}(p-1)+p^{\frac{m-2}{2}}(-1)^{\frac{m(p-1)}{4}}+1$ & $\frac{p^{m-1}(p-1)^2(2p^{m-1}-1+(p-2)p^{\frac{m-2}{2}}(\sqrt{-1})^{\frac{(p-1)m}{2}})}{2}$ \\
 $p^{m-1}(p-1)+p^{\frac{m-2}{2}}(-1)^{\frac{m(p-1)}{4}}+2$ & $\frac{p^{m-1}(p-1)^3(p^{m-1}-p^{\frac{m-2}{2}}(\sqrt{-1})^{\frac{(p-1)m}{2}})}{2}$\\
\hline
\end{tabular}
\end{center}
\end{table}

\begin{proof}
Let $\chi$ and $\chi'$ be the canonical additive characters of $\gf(q)$ and $\gf(p)$, respectively. Let $\eta,\eta'$ be the quadratic multiplicative characters of $\gf(q)^*$ and $\gf(p)^*$, respectively.
Denote $N_0(a,b,c)=\sharp\{x\in \gf(q):\tr_{q/p}(ax^2+bx)+c=0\}$. By the orthogonality relation of additive characters,
\begin{eqnarray}\label{eqn-4}
\nonumber N_0(a,b,c)&=&\frac{1}{p}\sum_{z\in \gf(p)}\sum_{x\in \gf(q)}\zeta_{p}^{z(\tr_{q/p}(ax^2+bx)+c)}\\
\nonumber &=&\frac{q}{p}+\frac{1}{p}\sum_{z\in \gf(p)^*}\zeta_{p}^{zc}\sum_{x\in \gf(q)}\chi(zax^2+zbx)\\
&=&p^{m-1}+\frac{1}{p}\Delta(a,b,c),
\end{eqnarray}
where $\Delta(a,b,c):=\sum_{z\in \gf(p)^*}\zeta_{p}^{zc}\sum_{x\in \gf(q)}\chi(zax^2+zbx)$. We discuss the value of $\Delta(a,b,c)$ in the following cases.
\begin{enumerate}
\item[(1)] Let $(a,b,c)=(0,0,0)$. Then $\Delta(a,b,c)=q(p-1)$.
\item[(2)] Let $(a,b)=(0,0)$ and $c\neq 0$. Then $\Delta(a,b,c)=q\sum_{z\in \gf(p)^*}\zeta_{p}^{zc}=-q$.
\item[(3)] Let $a=0$ and $b\neq 0$. Then
\begin{eqnarray*}
\Delta(a,b,c)&=&\sum_{z\in \gf(p)^*}\zeta_{p}^{zc}\sum_{x\in \gf(q)}\chi(zbx)\\
&=&0.
\end{eqnarray*}
\item[(4)] Let $a\neq 0$. By Lemma \ref{lem-charactersum-evenq} we have
\begin{eqnarray*}
\Delta(a,b,c)&=&\sum_{z\in \gf(p)^*}\zeta_{p}^{zc}\chi(-z^2b^2(4za)^{-1})\eta(za)G(\eta,\chi)\\
&=&G(\eta,\chi)\eta(a)\sum_{z\in \gf(p)^*}\zeta_{p}^{zc}\eta(z)\chi(-\frac{b^2}{4a}z)\\
&=&G(\eta,\chi)\eta(a)\sum_{z\in \gf(p)^*}\zeta_{p}^{\left(c-\tr_{q/p}(\frac{b^2}{4a})\right)z}\eta(z)\\
&=&G(\eta,\chi)\eta(a)\sum_{z\in \gf(p)^*}\chi'\left(\left(c-\tr_{q/p}(\frac{b^2}{4a})\right)z\right)\eta(z).
\end{eqnarray*}
\begin{enumerate}
\item[$\bullet$]If $m$ is odd, we have $\eta(z)=\eta'(z)$ for $z\in \gf(p)^*$. Then by Lemma \ref{quadGuasssum} we have
\begin{eqnarray*}
\lefteqn{ \Delta(a,b,c) } \\ 
&=&\left\{\begin{array}{ll}
G(\eta,\chi)\eta(a)\sum_{z\in \gf(p)^*}\eta'(z) & \mbox{ if }c=\tr_{q/p}(\frac{b^2}{4a})\\
\myatop{\mbox{$G(\eta,\chi)\eta(a)\eta'\left(c-\tr_{q/p}(\frac{b^2}{4a})\right)\times$}}
{\mbox{$\sum\limits_{z\in \gf(p)^*}\chi'\left(\left(c-\tr_{q/p}(\frac{b^2}{4a})\right)z\right)\eta'\left(\left(c-\tr_{q/p}(\frac{b^2}{4a})\right)z\right)$}} & \mbox{ if }c\neq\tr_{q/p}(\frac{b^2}{4a})\\
\end{array} \right.\\
&=&\left\{\begin{array}{ll}
0 & \mbox{ if }c=\tr_{q/p}(\frac{b^2}{4a}) \\
G(\eta,\chi)G(\eta',\chi')\eta(a)\eta'\left(c-\tr_{q/p}(\frac{b^2}{4a})\right) &
\mbox{ if }c\neq\tr_{q/p}(\frac{b^2}{4a}) \\
\end{array} \right.\\
&=&\left\{\begin{array}{ll}
0 & \mbox{ if }c=\tr_{q/p}(\frac{b^2}{4a}),\\
p^{\frac{m+1}{2}}(-1)^{\frac{(p-1)(m+1)}{4}} &
\mbox{ if }c\neq\tr_{q/p}(\frac{b^2}{4a}),\ \eta(a)\eta'\left(c-\tr_{q/p}(\frac{b^2}{4a})\right)=1,\\
p^{\frac{m+1}{2}}(-1)^{\frac{(p-1)(m+1)+4}{4}} &
\mbox{ if }c\neq\tr_{q/p}(\frac{b^2}{4a}),\ \eta(a)\eta'\left(c-\tr_{q/p}(\frac{b^2}{4a})\right)=-1.
\end{array} \right.\\
\end{eqnarray*}
\item[$\bullet$]If $m$ is even, then $\eta(z)=1$ for any $z\in \gf(p)^*$. Then by Lemma \ref{quadGuasssum} we have
\begin{eqnarray*}
\Delta(a,b,c)&=&G(\eta,\chi)\eta(a)\sum_{z\in \gf(p)^*}\chi'\left(\left(c-\tr_{q/p}(\frac{b^2}{4a})\right)z\right)\\
&=&\left\{\begin{array}{ll}
(p-1)p^{\frac{m}{2}}(-1)^{\frac{m(p-1)+4}{4}} & \mbox{ if }c=\tr_{q/p}(\frac{b^2}{4a}),\ \eta(a)=1,\\
(p-1)p^{\frac{m}{2}}(-1)^{\frac{m(p-1)}{4}} & \mbox{ if }c=\tr_{q/p}(\frac{b^2}{4a}),\ \eta(a)=-1,\\
p^{\frac{m}{2}}(-1)^{\frac{m(p-1)}{4}} &
\mbox{ if }c\neq\tr_{q/p}(\frac{b^2}{4a}),\ \eta(a)=1,\\
p^{\frac{m}{2}}(-1)^{\frac{m(p-1)+4}{4}} &
\mbox{ if }c\neq\tr_{q/p}(\frac{b^2}{4a}),\ \eta(a)=-1.
\end{array} \right.\\
\end{eqnarray*}
\end{enumerate}
\end{enumerate}
Equation (\ref{eqn-4}) and the preceding discussions yield
\begin{eqnarray}\label{eqn-5}
\lefteqn{ N_0(a,b,c)=} \nonumber \\ 
& 
\left\{\begin{array}{ll}
p^m & \mbox{ for }(a,b,c)=(0,0,0),\\
0 & \mbox{ for }(a,b)=(0,0),\ c\neq 0,\\
p^{m-1} & \mbox{ for }a=0,\ b\neq 0,\mbox{ or }a\neq 0,\ c=\tr_{q/p}(\frac{b^2}{4a}),\\
p^{m-1}+p^{\frac{m-1}{2}}(-1)^{\frac{(p-1)(m+1)}{4}} &
\mbox{ if }a\neq 0,\ c\neq\tr_{q/p}(\frac{b^2}{4a}),\ \eta(a)\eta'\left(c-\tr_{q/p}(\frac{b^2}{4a})\right)=1,\\
p^{m-1}+p^{\frac{m-1}{2}}(-1)^{\frac{(p-1)(m+1)+4}{4}} &
\mbox{ if }a\neq 0,\ c\neq\tr_{q/p}(\frac{b^2}{4a}),\ \eta(a)\eta'\left(c-\tr_{q/p}(\frac{b^2}{4a})\right)=-1,
\end{array} \right.
\end{eqnarray}
if $m$ is odd, and
\begin{eqnarray}\label{eqn-6}
N_0(a,b,c)=
\left\{\begin{array}{ll}
p^m & \mbox{ for }(a,b,c)=(0,0,0),\\
0 & \mbox{ for }(a,b)=(0,0),\ c\neq 0,\\
p^{m-1} & \mbox{ for }a=0,\ b\neq 0,\\
p^{m-1}+(p-1)p^{\frac{m-2}{2}}(-1)^{\frac{m(p-1)+4}{4}} & \mbox{ if }a\neq 0,\ c=\tr_{q/p}(\frac{b^2}{4a}),\ \eta(a)=1,\\
p^{m-1}+(p-1)p^{\frac{m-2}{2}}(-1)^{\frac{m(p-1)}{4}} & \mbox{ if }a\neq 0,\ c=\tr_{q/p}(\frac{b^2}{4a}),\ \eta(a)=-1,\\
p^{m-1}+p^{\frac{m-2}{2}}(-1)^{\frac{m(p-1)}{4}} &
\mbox{ if }a\neq 0,\ c\neq\tr_{q/p}(\frac{b^2}{4a}),\ \eta(a)=1,\\
p^{m-1}+p^{\frac{m-2}{2}}(-1)^{\frac{m(p-1)+4}{4}} &
\mbox{ if }a\neq 0,\ c\neq\tr_{q/p}(\frac{b^2}{4a}),\ \eta(a)=-1,
\end{array} \right.
\end{eqnarray}
if $m$ is even.

For any codeword $\bc(a,b,c)=\left((\tr_{q/p}(ax^2+bx)+c)_{x\in \gf(q)},\tr_{q/p}(a),\tr_{q/p}(b)\right)$ with $m$ odd, by Equation (\ref{eqn-5}) we easily derive that
\begin{eqnarray*}
\lefteqn{ \wt(\bc(a,b,c)) } \\ 
&=&\left\{\begin{array}{ll}
0 & \mbox{ for }(a,b,c)=(0,0,0) \\
p^m & \mbox{ for }(a,b)=(0,0),\ c\neq 0 \\
p^{m-1}(p-1) & \mbox{ for }a=0,\ b\neq 0,\ \tr_{q/p}(b)=0 \\
p^{m-1}(p-1)+1 & \mbox{ for }a=0,\ b\neq 0,\ \tr_{q/p}(b)\neq0 \\
p^{m-1}(p-1) & \myatop{\mbox{ for $a\neq 0,\ c=\tr_{q/p}(\frac{b^2}{4a}),$}}{\mbox{$\tr_{q/p}(a)=\tr_{q/p}(b)=0$}} \\
p^{m-1}(p-1)+1 & \myatop{\mbox{ for $a\neq 0,\ c=\tr_{q/p}(\frac{b^2}{4a}),$}}{\mbox{$\tr_{q/p}(a)\neq 0,
\ \tr_{q/p}(b)=0$}} \\
p^{m-1}(p-1)+1 & \myatop{\mbox{ for $a\neq 0,\ c=\tr_{q/p}(\frac{b^2}{4a}),$}}{\mbox{$\tr_{q/p}(a)=0,
\ \tr_{q/p}(b)\neq0$}} \\
p^{m-1}(p-1)+2 & \myatop{\mbox{ for $a\neq 0,\ c=\tr_{q/p}(\frac{b^2}{4a}),$}}{\mbox{$\tr_{q/p}(a)\neq 0,
\ \tr_{q/p}(b)\neq0$}} \\
p^{m-1}(p-1)-p^{\frac{m-1}{2}}(-1)^{\frac{(p-1)(m+1)}{4}} &
\myatop{\mbox{ for $a\neq 0,\ c\neq\tr_{q/p}(\frac{b^2}{4a}),$}}{\myatop{\mbox{$\eta(a)\eta'\left(c-\tr_{q/p}(\frac{b^2}{4a})\right)=1,$}}{\mbox{$\tr_{q/p}(a)=\tr_{q/p}(b)=0$}}} \\
p^{m-1}(p-1)-p^{\frac{m-1}{2}}(-1)^{\frac{(p-1)(m+1)}{4}}+1 &
\myatop{\mbox{ for $a\neq 0,\ c\neq\tr_{q/p}(\frac{b^2}{4a}),$}}{\myatop{\mbox{$\eta(a)\eta'\left(c-\tr_{q/p}(\frac{b^2}{4a})\right)=1,$}}{\mbox{$\tr_{q/p}(a)=0,\ \tr_{q/p}(b)\neq0$}}} \\
p^{m-1}(p-1)-p^{\frac{m-1}{2}}(-1)^{\frac{(p-1)(m+1)}{4}}+1 &
\myatop{\mbox{ for $a\neq 0,\ c\neq\tr_{q/p}(\frac{b^2}{4a}),$}}{\myatop{\mbox{$\eta(a)\eta'\left(c-\tr_{q/p}(\frac{b^2}{4a})\right)=1,$}}{\mbox{$\tr_{q/p}(a)\neq 0,\ \tr_{q/p}(b)=0$}}} \\
p^{m-1}(p-1)-p^{\frac{m-1}{2}}(-1)^{\frac{(p-1)(m+1)}{4}}+2 &
\myatop{\mbox{ for $a\neq 0,\ c\neq\tr_{q/p}(\frac{b^2}{4a}),$}}{\myatop{\mbox{$\eta(a)\eta'\left(c-\tr_{q/p}(\frac{b^2}{4a})\right)=1,$}}{\mbox{$\tr_{q/p}(a)\neq 0,\ \tr_{q/p}(b)\neq 0$}}} \\
p^{m-1}(p-1)-p^{\frac{m-1}{2}}(-1)^{\frac{(p-1)(m+1)+4}{4}} &
\myatop{\mbox{ for $a\neq 0,\ c\neq\tr_{q/p}(\frac{b^2}{4a}),$}}{\myatop{\mbox{$\eta(a)\eta'\left(c-\tr_{q/p}(\frac{b^2}{4a})\right)=-1,$}}{\mbox{$\tr_{q/p}(a)=\tr_{q/p}(b)=0$}}} \\
p^{m-1}(p-1)-p^{\frac{m-1}{2}}(-1)^{\frac{(p-1)(m+1)+4}{4}}+1 &
\myatop{\mbox{ for $a\neq 0,\ c\neq\tr_{q/p}(\frac{b^2}{4a}),$}}{\myatop{\mbox{$\eta(a)\eta'\left(c-\tr_{q/p}(\frac{b^2}{4a})\right)=-1,$}}{\mbox{$\tr_{q/p}(a)=0,\ \tr_{q/p}(b)\neq0$}}}\\
p^{m-1}(p-1)-p^{\frac{m-1}{2}}(-1)^{\frac{(p-1)(m+1)+4}{4}}+1 &
\myatop{\mbox{ for $a\neq 0,\ c\neq\tr_{q/p}(\frac{b^2}{4a}),$}}{\myatop{\mbox{$\eta(a)\eta'\left(c-\tr_{q/p}(\frac{b^2}{4a})\right)=-1,$}}{\mbox{$\tr_{q/p}(a)\neq 0,\ \tr_{q/p}(b)=0$}}}\\
p^{m-1}(p-1)-p^{\frac{m-1}{2}}(-1)^{\frac{(p-1)(m+1)+4}{4}}+2 &
\myatop{\mbox{ for $a\neq 0,\ c\neq\tr_{q/p}(\frac{b^2}{4a}),$}}{\myatop{\mbox{$\eta(a)\eta'\left(c-\tr_{q/p}(\frac{b^2}{4a})\right)=-1,$}}{\mbox{$\tr_{q/p}(a)\neq 0,\ \tr_{q/p}(b)\neq 0$}}}\\
\end{array} \right.\\
&=&\left\{\begin{array}{ll}
0 & \mbox{ with 1 time,}\\
p^m & \mbox{ with $p-1$ times,}\\
p^{m-1}(p-1) & \mbox{ with $(p^{m-1}-1)(p+p^{m-1})$ times,}\\
p^{m-1}(p-1)+1 & \mbox{ with $(p^{m}-p^{m-1})(p+2p^{m-1}-1)$ times,}\\
p^{m-1}(p-1)+2 & \mbox{ with $(p^{m}-p^{m-1})^2$ times,}\\
p^{m-1}(p-1)-p^{\frac{m-1}{2}}(-1)^{\frac{(p-1)(m+1)}{4}} & \mbox{ with $\frac{p^{m-1}(p-1)(p^{m-1}-1)}{2}$ times,}\\
p^{m-1}(p-1)-p^{\frac{m-1}{2}}(-1)^{\frac{(p-1)(m+1)}{4}}+1 & \mbox{ with $\frac{p^{m-1}(p-1)^2(2p^{m-1}-1)}{2}$ times,}\\
p^{m-1}(p-1)-p^{\frac{m-1}{2}}(-1)^{\frac{(p-1)(m+1)}{4}}+2 & \mbox{ with $\frac{p^{2m-2}(p-1)^3}{2}$ times,}\\
p^{m-1}(p-1)-p^{\frac{m-1}{2}}(-1)^{\frac{(p-1)(m+1)+4}{4}} & \mbox{ with $\frac{p^{m-1}(p-1)(p^{m-1}-1)}{2}$ times,}\\
p^{m-1}(p-1)-p^{\frac{m-1}{2}}(-1)^{\frac{(p-1)(m+1)+4}{4}}+1 & \mbox{ with $\frac{p^{m-1}(p-1)^2(2p^{m-1}-1)}{2}$ times,}\\
p^{m-1}(p-1)-p^{\frac{m-1}{2}}(-1)^{\frac{(p-1)(m+1)+4}{4}}+2 & \mbox{ with $\frac{p^{2m-2}(p-1)^3}{2}$ times,}\\
\end{array} \right.\\
\end{eqnarray*}
where the frequency is easy to obtain based on Lemma \ref{lem-equalities}. As an example, we compute the frequency of the nonzero weight
$$w=p^{m-1}(p-1)-p^{\frac{m-1}{2}}(-1)^{\frac{(p-1)(m+1)}{4}},$$ i.e.
$$A_w=\sharp \{(a,b,c):a\neq 0,\ c\neq\tr_{q/p}(\frac{b^2}{4a}),\ \eta(a)\eta'\left(c-\tr_{q/p}(\frac{b^2}{4a})\right)=1,\ \tr_{q/p}(a)=\tr_{q/p}(b)=0\}.$$
 Clearly, the number of $b$ such that $\tr_{q/p}(b)=0$ is $n_b=p^{m-1}$.  For $\eta(a)=\eta'\left(c-\tr_{q/p}(\frac{b^2}{4a})\right)=1$, the number of $a$ such that $\eta(a)=1,\tr_{q/p}(a)=0$ equals $n_a=\frac{p^{m-1}-1}{2}$ by Lemma \ref{lem-equalities}, and if we fix $a,b$, then the number of $c$ such that $\eta'\left(c-\tr_{q/p}(\frac{b^2}{4a})\right)=1$ is equal to $n_c=\frac{p-1}{2}$. For $\eta(a)=\eta'\left(c-\tr_{q/p}(\frac{b^2}{4a})\right)=-1$, the number of $a$ such that $\eta(a)=-1,\tr_{q/p}(a)=0$ equals $n_a=\frac{p^{m-1}-1}{2}$ by Lemma \ref{lem-equalities}, and if we fix $a,b$, then the number of $c$ such that $\eta'\left(c-\tr_{q/p}(\frac{b^2}{4a})\right)=-1$ is equal to $n_c=\frac{p-1}{2}$. Hence,
$$A_w=2n_an_bn_c=\frac{p^{m-1}(p-1)(p^{m-1}-1)}{2}.$$
The frequencies of other nonzero weights can be similarly derived. The dimension is $2m+1$ as $A_0=1$.

For even $m$, we can similarly derive the weight distribution by Equation (\ref{eqn-6}) and Lemma \ref{lem-equalities}. We omit the details here.
The desired conclusions then follow.
\end{proof}

We remark that $\C_{\cH(f_1)}^{(p)}$ in Theorem \ref{th-main3} produces optimal or almost optimal codes in some cases according to the tables of best codes known
maintained at http://www.codetables.de. These optimal or almost optimal codes are listed in Table \ref{tab-52}. Here we call an $[n,k,d]$ code almost optimal if  the corresponding optimal code has parameters $[n,k,d+1]$. 

Note that the function $f_1(x)=x^2$ over $\gf(p^m)$ is linear when $p=2$ and nonlinear when 
$p$ is odd. Hence, $\C_{\cH(f_1)}^{(2)}$ and $\C_{\cH(f_1)}^{(p)}$ for odd $p$ are different 
in several aspects. The dimensions of the codes in the two cases tell us such difference.

\begin{table}[ht]
\begin{center}
\caption{Optimal or almost optimal codes produced by  $\C_{\cH(f_1)}^{(p)}$}\label{tab-52}
\begin{tabular}{cccc} \hline
$p$  &  $m$ & Parameters & Optimality   \\ \hline
3 & 1 & $[5,3,2]$ & Optimal \\
5 & 1 & $[7,3,4]$ & Optimal \\
7 & 1 & $[9,3,6]$ & Optimal \\
3 & 3 & $[29,7,15]$ & Optimal \\
3 & 4 & $[83,9,48]$ & Almost optimal \\
5 & 3 & $[127,7,95]$ & Optimal \\
\hline
\end{tabular}
\end{center}
\end{table}

\section{The subfield codes of the conic codes}\label{sec-coniccode}

Let $q$ be odd. We consider the conic in $\PG(2,\gf(q))$: 
$$\cO=\{(x^2,x,1):x\in \gf(q)\}\cup \{(1,0,0)\}.$$ 
The conic code $\C_{\cO}$ over $\gf(q)$ has parameters $[q+1,3,q-1]$  and generator matrix
$$G_{\cO}=\begin{bmatrix} x_1^2 & x_2^2 & \cdots & x_q^2 & 1\\
 x_1 & x_2 & \cdots & x_q & 0\\
 1 & 1 & \cdots & 1 & 0\\ \end{bmatrix},$$ where $\gf(q)=\{x_1,x_2,\cdots,x_{q}\}$. In this section, we study the subfield code $\C_{\cO}^{(p)}$ of $\C_{\cO}$ with $p$ odd.

By Lemma \ref{th-tracerepresentation}, the trace representation of $\C_{\cO}^{(p)}$ is given by
\begin{eqnarray*}
\C_{\cO}^{(p)}=\left\{\left((\tr_{q/p}(ax^2+bx)+c)_{x\in \gf(q)},\tr_{q/p}(a)\right):a,b\in \gf(q),c\in \gf(p)\right\}.
\end{eqnarray*}
Note that  $\C_{\cO}^{(p)}$ can be viewed as a punctured code from $\C_{\cH(f_1)}^{(p)}$ in Section \ref{sect-p-ary-case}.

\begin{theorem}\label{th-main4}
Let $q=p^m$ with $p$ odd and $m>1$ a positive integer. The following statements hold.
\begin{enumerate}
\item[(1)]If $m$ is odd, then $\C_{\cO}^{(p)}$ is a $[p^m+1,2m+1,p^{m-1}(p-1)-p^{\frac{m-1}{2}}]$ code with the weight distribution in Table \ref{tab-5}.
\item[(2)] If $m$ is even, then $\C_{\cO}^{(p)}$ is a $[p^m+1,2m+1,p^{m-1}(p-1)-(p-1)p^{\frac{m-2}{2}}]$  code with the weight distribution in Table \ref{tab-6}.
\item[(3)] If $p>3$, then $\C_{\cO}^{(p)\bot}$ is a $[p^m+1,p^m-2m,4]$  code which is distance-optimal with respect to the sphere-packing bound.
\end{enumerate}
\end{theorem}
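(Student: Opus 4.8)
The plan is to piggyback on the character sums already evaluated in Theorem \ref{th-main3}, using the observation, recorded in the text, that $\C_{\cO}^{(p)}$ is the puncturing of $\C_{\cH(f_1)}^{(p)}$ at the single coordinate that records $\tr_{q/p}(b)$. For $a,b\in\gf(q)$ and $c\in\gf(p)$ put $N_0(a,b,c)=\sharp\{x\in\gf(q):\tr_{q/p}(ax^2+bx)+c=0\}$; this is exactly the quantity given by Equations (\ref{eqn-5}) and (\ref{eqn-6}). The weight of a codeword of $\C_{\cO}^{(p)}$ is then
\begin{equation*}
\wt\big((\tr_{q/p}(ax^2+bx)+c)_{x\in\gf(q)},\tr_{q/p}(a)\big)=\big(p^m-N_0(a,b,c)\big)+\varepsilon(a),
\end{equation*}
where $\varepsilon(a)=0$ if $\tr_{q/p}(a)=0$ and $\varepsilon(a)=1$ otherwise. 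First I would substitute (\ref{eqn-5}) for odd $m$ and (\ref{eqn-6}) for even $m$ into this identity to read off the possible nonzero weights; compared with Tables \ref{tab-3} and \ref{tab-4}, the weights whose ``trace contribution'' equals $2$ disappear, since only one trace coordinate survives, so Tables \ref{tab-5} and \ref{tab-6} are shorter.

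Next I would compute the multiplicities by the same counting as in the proof of Theorem \ref{th-main3}, but now \emph{without} splitting on the value of $\tr_{q/p}(b)$: for fixed $a\neq0$ and fixed $b$, as $c$ runs over $\gf(p)$ exactly one value satisfies $c=\tr_{q/p}(b^2/(4a))$, and among the remaining $p-1$ values the sign $\eta'\big(c-\tr_{q/p}(b^2/(4a))\big)$ is $+1$ for $(p-1)/2$ of them and $-1$ for the other $(p-1)/2$, independently of $b$; meanwhile $b$ now ranges freely over $\gf(q)$. The number of $a\in\gf(q)^*$ with prescribed $\eta(a)$ and prescribed value of $\tr_{q/p}(a)$ is supplied by Lemma \ref{lem-equalities}. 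Multiplying these three essentially independent counts and collecting equal weights yields the multiplicity columns of Tables \ref{tab-5} and \ref{tab-6}; in particular $A_0=1$, so $\dim\C_{\cO}^{(p)}=2m+1$ (equivalently, the puncturing lowers no dimension, since no codeword of $\C_{\cH(f_1)}^{(p)}$ is supported on the deleted coordinate alone, the squares of $\gf(q)$ spanning $\gf(q)$ over $\gf(p)$ for $p\geq3$). The minimum distances stated in (1) and (2) are then the smallest nonzero weights appearing in the respective tables. The only genuinely delicate points here are bookkeeping ones: the weights $p^{m-1}(p-1)$ and $p^{m-1}(p-1)+1$ each collect contributions from more than one $(a,b,c)$-regime (from $a=0$ as well as from $a\neq0$ with $c=\tr_{q/p}(b^2/(4a))$), so their multiplicities are sums of the corresponding partial counts, and for even $m$ one must drag the factors $(\sqrt{-1})^{(p-1)m/2}$ produced by Lemma \ref{lem-equalities} through the entire count so as to match Table \ref{tab-6} exactly.

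For part (3), $\C_{\cO}^{(p)\bot}$ has length $p^m+1$ and, by (1)--(2), dimension $(p^m+1)-(2m+1)=p^m-2m$. The dual $\C_{\cO}^{\bot}$ of the conic code is the MDS code with parameters $[p^m+1,p^m-2,4]$, so its minimum distance equals $4$, and Lemma \ref{th-dualdistance} gives $d^{(p)\bot}\geq4$. To get the reverse inequality and optimality simultaneously, I would apply the sphere-packing bound to a hypothetical $[p^m+1,p^m-2m,5]$ code over $\gf(p)$: it would have to satisfy
\begin{equation*}
1+(p-1)(p^m+1)+\binom{p^m+1}{2}(p-1)^2\ \leq\ p^{2m+1}.
\end{equation*}
But $\binom{p^m+1}{2}(p-1)^2>\frac{1}{2}p^{2m}(p-1)^2$, and $(p-1)^2>2p$ for every prime $p>3$, so the left-hand side already exceeds $p^{2m+1}$ and the inequality fails. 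Hence no $[p^m+1,p^m-2m,5]$ code over $\gf(p)$ exists; therefore $d^{(p)\bot}=4$ and $\C_{\cO}^{(p)\bot}$ is distance-optimal with respect to the sphere-packing bound. I do not expect any step to pose a conceptual obstacle: the nontrivial character-sum work is inherited verbatim from Theorem \ref{th-main3}, and the main effort is the careful collection of multiplicities described above.
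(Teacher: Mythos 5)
Your proposal is correct and follows essentially the same route as the paper, which simply observes that $\C_{\cO}^{(p)}$ is a puncturing of $\C_{\cH(f_1)}^{(p)}$ and inherits the character-sum analysis of Theorem \ref{th-main3}; your multiplicity counts (dropping the split on $\tr_{q/p}(b)$ and letting $b$ range freely) reproduce Tables \ref{tab-5} and \ref{tab-6}. You in fact supply more than the paper does, since its proof omits all details and says nothing about part (3), whereas your sphere-packing argument for $d^{(p)\perp}=4$ when $p>3$ is the same technique the authors use in Theorem \ref{mian-1} and checks out.
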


\begin{proof}
Since $\C_{\cO}^{(p)}$ is a punctured code of $\C_{\cH(f_1)}^{(p)}$, the weight distribution 
of $\C_{\cO}^{(p)}$ follows from the proof of Theorem \ref{th-main3}. We omit the details of 
the proof here. 
\end{proof}

\begin{table}[ht]
\begin{center}
\caption{The weight distribution of $\C_{\cO}^{(p)}$ with $m$ odd}\label{tab-5}
\begin{tabular}{cc} \hline
Weight  &  Multiplicity   \\ \hline
0 & 1 \\
$p^m$ & $p-1$\\
$p^{m-1}(p-1)$ & $p(p^m-1)+p^m(p^{m-1}-1)$\\
$p^{m-1}(p-1)+1$ & $p^m(p^m-p^{m-1})$\\
$p^{m-1}(p-1)-p^{\frac{m-1}{2}}(-1)^{\frac{(p-1)(m+1)}{4}}$ & $\frac{p^m(p^{m-1}-1)(p-1)}{2}$\\
$p^{m-1}(p-1)-p^{\frac{m-1}{2}}(-1)^{\frac{(p-1)(m+1)}{4}}+1$ & $\frac{p^{2m-1}(p-1)^2}{2}$\\
$p^{m-1}(p-1)+p^{\frac{m-1}{2}}(-1)^{\frac{(p-1)(m+1)}{4}}$ & $\frac{p^m(p^{m-1}-1)(p-1)}{2}$\\
$p^{m-1}(p-1)+p^{\frac{m-1}{2}}(-1)^{\frac{(p-1)(m+1)}{4}}+1$ & $\frac{p^{2m-1}(p-1)^2}{2}$\\
\hline
\end{tabular}
\end{center}
\end{table}

\begin{table}[ht]
\begin{center}
\caption{The weight distribution of $\C_{\cO}^{(p)}$ with $m$ even}\label{tab-6}
\begin{tabular}{cc} \hline
Weight  &  Multiplicity   \\ \hline
0 & 1\\
 $p^m$ & $p-1$\\
 $p^{m-1}(p-1)$ & $p(p^{m}-1)$\\
 $p^{m-1}(p-1)+(p-1)p^{\frac{m-2}{2}}(-1)^{\frac{m(p-1)}{4}}$ & $\frac{p^{m}(p^{m-1}-1-(p-1)p^{\frac{m-2}{2}}(\sqrt{-1})^{\frac{(p-1)m}{2}})}{2}$\\
  $p^{m-1}(p-1)+(p-1)p^{\frac{m-2}{2}}(-1)^{\frac{m(p-1)}{4}}+1$ & $\frac{p^{m}(p-1)(p^{m-1}+p^{\frac{m-2}{2}}(\sqrt{-1})^{\frac{(p-1)m}{2}})}{2}$\\
    $p^{m-1}(p-1)-(p-1)p^{\frac{m-2}{2}}(-1)^{\frac{m(p-1)}{4}}$ & $\frac{p^{m}(p^{m-1}-1+(p-1)p^{\frac{m-2}{2}}(\sqrt{-1})^{\frac{(p-1)m}{2}})}{2}$\\
  $p^{m-1}(p-1)-(p-1)p^{\frac{m-2}{2}}(-1)^{\frac{m(p-1)}{4}}+1$ & $\frac{p^{m}(p-1)(p^{m-1}-p^{\frac{m-2}{2}}(\sqrt{-1})^{\frac{(p-1)m}{2}})}{2}$\\
 $p^{m-1}(p-1)-p^{\frac{m-2}{2}}(-1)^{\frac{m(p-1)}{4}}$ & $\frac{p^{m}(p-1)(p^{m-1}-1-(p-1)p^{\frac{m-2}{2}}(\sqrt{-1})^{\frac{(p-1)m}{2}})}{2}$\\
 $p^{m-1}(p-1)-p^{\frac{m-2}{2}}(-1)^{\frac{m(p-1)}{4}}+1$ & $\frac{p^{m}(p-1)^2(p^{m-1}+p^{\frac{m-2}{2}}(\sqrt{-1})^{\frac{(p-1)m}{2}})}{2}$\\
 $p^{m-1}(p-1)+p^{\frac{m-2}{2}}(-1)^{\frac{m(p-1)}{4}}$ & $\frac{p^{m}(p-1)(p^{m-1}-1+(p-1)p^{\frac{m-2}{2}}(\sqrt{-1})^{\frac{(p-1)m}{2}})}{2}$\\
 $p^{m-1}(p-1)+p^{\frac{m-2}{2}}(-1)^{\frac{m(p-1)}{4}}+1$ & $\frac{p^{m}(p-1)^2(p^{m-1}-p^{\frac{m-2}{2}}(\sqrt{-1})^{\frac{(p-1)m}{2}})}{2}$ \\
\hline
\end{tabular}
\end{center}
\end{table}

We remark that $\C_{\cO}^{(p)}$ in Theorem \ref{th-main4} produces optimal or almost optimal codes in some cases according to the tables of best codes known
maintained at http://www.codetables.de. These optimal or almost optimal codes are listed in Table \ref{tab-7}. Here we call an $[n,k,d]$ code almost optimal if  the corresponding optimal code has parameters $[n,k,d+1]$.
\begin{table}[ht]
\begin{center}
\caption{Optimal or almost optimal codes produced by  $\C_{\cO}^{(p)}$}\label{tab-7}
\begin{tabular}{cccc} \hline
$p$  &  $m$ & Parameters & Optimality   \\ \hline
3 & 2 & $[10,5,4]$ & Almost optimal \\
3 & 3 & $[28,7,15]$ & Optimal \\
3 & 4 & $[82,9,48]$ & Optimal \\
5 & 2 & $[26,5,16]$ & Almost optimal \\
5 & 3 & $[126,7,95]$ & Optimal \\
\hline
\end{tabular}
\end{center}
\end{table}

\section{Conclusions and remarks}\label{sect-conlusion}

In this paper, we mainly investigated the binary subfield codes of two classes of hyperoval codes and the $p$-ary subfield codes of the conic codes for odd $p$. The weight distribution of the $p$-ary code $\C_{\cH(f_1)}^{(p)}$ was also determined. The codes presented in this paper have optimal or almost optimal parameters in some cases. These codes are summarized as follows:
\begin{enumerate}
\item[$\bullet$] Examples \ref{exa-1} and \ref{exa-2} show that the binary subfield codes of the hyperoval codes are optimal in some cases.
\item[$\bullet$] Table \ref{tab-5} demonstrates that the $p$-ary code $\C_{\cH(f_1)}^{(p)}$ is optimal or almost optimal in some cases. 
\item[$\bullet$] Table \ref{tab-7} demonstrates that the $p$-ary subfield code $\C_{\cO}^{(p)}$ of the conic code with $p$ odd is optimal or almost optimal in some cases.
\item[$\bullet$] In Theorem \ref{th-main3}, a class of $p$-ary almost MDS $[p+2,3,p-1]$ code is presented.
\item[$\bullet$] Note that the subfield code $\C_{\cH(f_1)}^{(2)}$ has bad parameters. However, 
its dual code $\C_{\cH(f_1)}^{(2)\bot}$ is distance-optimal according to the sphere-packing bound and has parameters $[2^m+2,2^m-m,4]$. This shows that it is worthwhile to study the 
subfield code $\C_{\cH(f_1)}^{(2)}$. 
\item[$\bullet$] Theorem \ref{th-main4} shows that $\C_{\cO}^{(p)\bot}$ is a $[p^m+1,p^m-2m,4]$ code which is distance-optimal with respect to the sphere-packing bound for $p>3$.
\end{enumerate} 
Our motivation of studying the subfield codes of hyperoval and conic codes is to 
construct linear codes with new parameters. To our knowledge, these codes have 
new parameters as families of linear codes. 

We mention that hyperovals are related to bent functions \cite{AM17,CM16}. Bent 
functions could be employed to construct linear codes in many ways \cite{Ding15,Mesn15,Mesn,TCZ}. 
However, no codes of length $q^m+1$ are constructed in these references. Hence, the codes 
presented in this paper would be different.  
 
In addition to the translation and Segre hyperovals, there are several other families of 
hyperovals in the literature \cite{DY}. It seems hard to determine the minimum distances 
of the subfield codes of these hyperoval codes, let alone their weight distributions. 
The reader is cordially invited to settle these problems.

\end{document}